\documentclass[a4paper,UKenglish,cleveref, autoref, thm-restate]{lipics-v2021}
\usepackage{mathtools}
\def\set#1{\mathord{\{#1\}}}

\title{Optimal Repairs for Unary Functional Dependencies: Resolving the Case of Updates}

\author{Benny Kimelfeld}{Technion -- Israel Institute of Technology}{bennyk@cs.technion.ac.il}{https://orcid.org/0000-0002-7156-1572}{}

\author{Ester Livshits}{Technion -- Israel Institute of Technology}{esterliv@technion.ac.il}{https://orcid.org/0000-0003-3485-9887}{}

\authorrunning{Kimelfeld and Livhits}

\Copyright{CC-BY} 

\ccsdesc[500]{Information systems~Data management systems}

\keywords{Update repairs, functional dependencies} 

\category{} 

\relatedversion{} 

\EventEditors{John Q. Open and Joan R. Access}
\EventNoEds{2}
\EventLongTitle{42nd Conference on Very Important Topics (CVIT 2016)}
\EventShortTitle{CVIT 2016}
\EventAcronym{CVIT}
\EventYear{2016}
\EventDate{December 24--27, 2016}
\EventLocation{Little Whinging, United Kingdom}
\EventLogo{}
\SeriesVolume{42}
\ArticleNo{23}

\usepackage{algorithm,algorithmicx}
\usepackage[noend]{algpseudocode}
\usepackage{xspace}
\usepackage{caption}
\usepackage{stackengine}
\usepackage{paralist}
\usepackage{color}
\usepackage{tabularx}
\usepackage{tikz}
\usetikzlibrary{positioning, fit,shapes.multipart,shapes.geometric}

\usepackage{cite}

\usepackage{enumitem}

\usepackage{ifmtarg}
\makeatletter
\newcommand{\algcaption}[2][]{%
  \refstepcounter{algorithm}%
  \@ifmtarg{#1}
    {\addcontentsline{loa}{figure}{\protect\numberline{\thealgorithm}{\ignorespaces #2}}}
    {\addcontentsline{loa}{figure}{\protect\numberline{\thealgorithm}{\ignorespaces #1}}}%
  \toprule
  \textbf{\fname@algorithm~\thealgorithm}\ #2\par 
  \midrule
}
\makeatother

\def\vals{\mathbf{Val}}
\def\atts{\mathbf{Att}}
\def\tids{\mathsf{tids}}

\newcommand{\defeq}{\vcentcolon=}
\newcommand{\eqdef}{=\vcentcolon}

\def\e#1{\emph{#1}}

\def\ra{\rightarrow}
\def\la{\leftarrow}
\def\lra{\leftrightarrow}
\newcommand{\eat}[1]{}

\def\att#1{\textsf{#1}}

\def\closure#1#2{#1^+_{#2}}

\def\e#1{\emph{#1}}

\def\eqdef{\stackrel{\textsf{\tiny def}}{=}}

\newcommand{\algname}[1]{{\sf #1}}
\def\myrulewidth{3.20in}

\def\therule{\rule{\myrulewidth}{0.2pt}}

\newenvironment{insidecode}[3]
{
\begin{tabular}{p{\myrulewidth}}
\multicolumn{1}{c}{\rule{0mm}{3mm}{\bf #3} $\algname{#1}(\mbox{#2})$\vspace{-0.6em}}\\
\therule\vskip-0.8em\therule
\vspace{0em}
\begin{algorithmic}[1]}
{\end{algorithmic}
\vskip-0.3em\therule
\end{tabular}}

\newcommand{\graph}{\mathcal{G}}

\newcommand{\depset}{\mathrm{\Delta}}

\def\phi{\varphi}

\newcommand{\comptask}[3]{
\vskip1em
\fbox{
\begin{tabular}{rp{0.3\textwidth}}
  \textbf{Problem:} & \e{#1}\\\hline
  \textbf{Input:} & #2\\
  \textbf{Goal:} & #3
\end{tabular}
}
\vskip1em
}

\newcommand{\remarkend}{\hfill$\triangleleft$}

\newenvironment{repeatresult}[2]
{\vskip0.5em\par\textsc{#1} #2.\em}
{\vskip1em}

\newenvironment{repproposition}[1]{\begin{repeatresult}{Proposition}{#1}}{\end{repeatresult}}
\newenvironment{reptheorem}[1]{\begin{repeatresult}{Theorem}{#1}}{\end{repeatresult}}

\newenvironment{proofsketch}{\begin{proof}{\textit{(Sketch)}\,\,}}
{\end{proof}}

\newtheorem{commentthm}[theorem]{Comment}

\def\true{\mathbf{true}}

\def\Ptime{\mathrm{P}}
\def\NP{\mathrm{NP}}

\def\partitle#1{\vskip0.3em\par\noindent\textbf{#1.}\,\,}

\def\att#1{\textsf{#1}}

\def\nodes{V}
\def\edges{E}

\newcommand{\mathsc}[1]{{\normalfont\textsc{#1}}}

\def\ra{\rightarrow}
\def\la{\leftarrow}
\def\lra{\leftrightarrow}

\def\distu{\mathit{dist}}

\newenvironment{subroutine}
{\begin{algorithm}\floatname{algorithm}{Subroutine}}
{\end{algorithm}}

\newcounter{subroutine}

\newcommand{\cut}[1]{}

\newcommand{\attr}{{\mathit{attr}}}

\definecolor{LightCyan}{rgb}{0.88,1,1}
\definecolor{Gray}{gray}{0.9}

\newcommand{\benny}[1]{{\texttt {\color{magenta} Benny: [{#1}]}}}
\newcommand{\ester}[1]{{{\tt \color{purple} Ester: [{#1}]}}}

\newcommand{\nat}{{\mathbb{N}}}

\newcommand{\proj}[1]{{\Pi}}
\newcommand{\sel}[1]{{\sigma}}

\newcommand{\cutfull}[1]{}

\newcommand{\commentresolved}[1]{}

\newcommand{\setof}[2]{\{{#1}\mid{#2}\}}        
\cut{

}

\renewcommand{\comptask}[3]{
\vskip1em\noindent
\fbox{%
\begin{tabularx}{\dimexpr\linewidth-2\fboxsep-2\fboxrule\relax}{rX}
  \textbf{Problem:} & \emph{#1} \\ \hline
  \textbf{Input:} & #2 \\
  \textbf{Goal:} & #3
\end{tabularx}%
}
\vskip1em
}

\def\scdepset#1{\depset^{\mathsf{sc}}_{#1}}

\begin{document}
\nolinenumbers

\maketitle

\begin{abstract}
If a table violates its required set of functional dependencies (FDs), what is the minimum number of cell changes needed to restore consistency? This fundamental problem, known as finding an optimal update repair (U-repair), is known to admit polynomial-time algorithms only for a small number of specific FD sets. Whether additional tractable cases exist has remained open. The only established hardness result for this problem is due to Kolahi and Lakshmanan (2009); subsequent attempts to prove hardness for additional cases have failed, leaving these cases unresolved. In this work, we make substantial progress on this open problem by completely resolving the case of unary FDs, in which every FD has a single attribute on its left-hand side. We show that every set of unary FDs either falls into one of the previously known tractable classes or makes the problem of finding an optimal U-repair NP-hard.
\end{abstract}

\section{Introduction}
\label{sec:introduction}

Suppose that a database violates a set of functional dependencies (FDs). \e{How many cell values need to be changed, at minimum, to restore consistency?} This question defines the problem of computing an \e{optimal update repair} (or \e{optimal U-repair}): we seek a database that satisfies the FDs and differs from the original database in as few cells as possible. Despite the elementary formulation of the problem and the central role of FDs in databases, its computational complexity is surprisingly poorly understood. As we explain later, polynomial-time algorithms are known only for a few restricted classes of FD sets, and the only established hardness result applies to a specific configuration of two FDs. Whether further tractable cases exist has remained open.

This problem is a natural instance of the general framework of database repairs. Inconsistencies arise in databases for many reasons. Data may originate from noisy or imprecise sources, such as sensors, social platforms, or automatically extracted content, and may be produced by error-prone processes such as machine learning and generative Artificial Intelligence (AI). Inconsistencies can also emerge when integrating databases from different sources, which may provide conflicting information or represent the same information in incompatible ways. Arenas et al.~\cite{DBLP:conf/pods/ArenasBC99} introduced a principled approach to managing such inconsistencies through the notions of \e{repairs} and \e{consistent query answering}. Given a database $D$ that violates a collection of integrity constraints, a repair is a consistent database $D'$ obtained from $D$ through a minimal collection of editing operations. Consistent query answering then seeks the query answers that hold across all repairs.

The repair framework admits many variants, according to three basic choices: the \e{integrity constraints} that define consistency, the \e{operations} allowed for restoring consistency, and the criterion used to measure minimality~\cite{DBLP:conf/icdt/AfratiK09}. Integrity constraints commonly include denial constraints~\cite{DBLP:journals/jiis/GaasterlandGM92}, and in particular the classical functional dependencies, as well as inclusion dependencies~\cite{DBLP:journals/jcss/CasanovaFP84}, which include referential constraints. Repair operations may consist of deleting or inserting tuples, or modifying individual cell values. Finally, minimality may be \e{local}, requiring that no proper subset of the performed operations suffices, or \e{global}, requiring minimum total cost among all ways of restoring consistency. For instance, when repairs are restricted to tuple deletions, a \e{subset repair}~\cite{DBLP:journals/iandc/ChomickiM05} is locally minimal, whereas a \e{cardinality repair}~\cite{DBLP:conf/icdt/LopatenkoB07} minimizes the number of deleted tuples. Operations may also be assigned nonuniform costs, for example to reflect different levels of confidence in data items~\cite{DBLP:conf/icdt/LopatenkoB07,DBLP:conf/icdt/KolahiL09,DBLP:conf/icdt/GiladIK23} or distance between the original and the updated values~\cite{DBLP:conf/icdt/KaminskyKLNW25,DBLP:conf/icde/KoudasSSV09}.

We study the natural combination in which consistency is defined by FDs, repairs are obtained exclusively through value updates, and minimality is global. In the basic setting, changing a single cell has unit cost, so the objective is precisely to minimize the number of changed cells. Our results extend directly to weighted updates, where cells may have different costs. This notion was called an ``optimum V-repair'' by Kolahi and Lakshmanan~\cite{DBLP:conf/icdt/KolahiL09}, and we adapt the term ``optimal U-repair'' by Livshits et al.~\cite{DBLP:journals/tods/LivshitsKR20}.

The current understanding of this optimization problem leaves a substantial gap between tractability and hardness. Livshits et al.~\cite{DBLP:journals/tods/LivshitsKR20} established polynomial-time algorithms for two general forms of FD sets, up to equivalence. The first is an \e{lhs-chain}, where the left-hand sides of the FDs form a chain under inclusion. The second is a \e{matching constraint} $\set{A\ra B,B\ra A}$, which we denote by $A\lra B$. When all FDs are unary, these results provide tractability essentially only when the FD set reduces to a single determinant $\set{A\ra X}$ or to a matching constraint $A\lra B$. In addition, when an FD set is a union of FD sets that share no common attributes, then the two corresponding repair problems can be solved separately (hence, $\set{A\lra B,C\lra D}$ is also tractable). 

The known hardness landscape is even more limited. Kolahi and Lakshmanan~\cite{DBLP:conf/icdt/KolahiL09} proved NP-hardness for $\set{A\ra B,B\ra C}$. Attempts to extend hardness beyond this case have not succeeded. In particular, in their conference version, Livshits et al.~\cite{DBLP:conf/pods/LivshitsKR18} claimed hardness for $\set{A\lra B\ra C}$; an error in the proof was subsequently discovered, as they discuss in the journal version~\cite{DBLP:journals/tods/LivshitsKR20}. More recently, Miao et al.~\cite{DBLP:journals/vldb/MiaoZLWC23} claimed NP-hardness for $\set{A\ra C,B\ra C}$. As we explain in \Cref{sec:flaw-acbc}, the proposed reduction does not establish this claim.

Hence, even after restricting attention to \e{unary FDs} (i.e., FDs with a single attribute on the left-hand side), a basic question has remained unanswered: are the known polynomial-time cases isolated exceptions, or are there additional classes of unary FDs that admit efficient optimal U-repair algorithms? We answer this question completely. We prove a dichotomy for unary FDs: for every set $\depset$ of unary FDs, either $\depset$ falls, up to equivalence, within one of the previously known tractable cases, or computing an optimal U-repair with respect to $\depset$ is NP-hard. In particular, the previously known algorithms exhaust the tractable cases among unary FDs.

\section{Preliminaries}
\label{sec:preliminaries}

We begin with preliminary definitions and notation.

\subparagraph*{Relations.}
Throughout the paper, we will assume countably infinite sets $\vals$ of \e{values} and $\atts$ of \e{attributes}. A \e{relation schema} is a set $R=\set{A_1,\dots,A_k}$ of attributes. An \e{$R$-tuple} is a function $t:R\ra\vals$
that maps each attribute $A\in R$ to a value that we denote by $t[A]$. A \e{relation} $r$ is associated with a relation schema, denoted
$\atts(r)$, a finite set of tuple identifiers, denoted $\tids(r)$, and a mapping from $\tids(r)$ to 
$\atts(r)$-tuples. We say that $r$ is a relation \e{over} the relation schema $\atts(r)$. We denote by $r[i]$ the tuple that $r$ maps to the identifier $i$. Hence, $r[i][A]$ is the value that tuple $i$ has for the attribute $A$. 

\begin{remark}
Note that we allow duplicate tuples, since we do not assume that tuples with different identifiers are necessarily distinct.\remarkend
\end{remark}

Let $X$ be a set of attributes. We denote by $\pi_X r$ the projection of $r$ onto $X$. Formally, $\pi_X r$ is the relation $r'$ such that $\atts(r')=X$, $\tids(r')=\tids(r)$, and $r'[i][A]=r[i][A]$ for every $A\in\atts(r)\cap X$. Observe that, under our notation, $(\pi_X r)[i]$ is the projection of the tuple $r[i]$ onto $X$. As a shorthand, we write $r[i][X]$ instead of $(\pi_X r)[i]$.

\subparagraph*{Functional dependencies.}
A \e{functional dependency}, or \e{FD} for short, is an expression of the form $X\ra Y$, where $X$ and $Y$ are finite sets of attributes. We say that $X\ra Y$ is \e{over} a relation schema $R$ if $X\cup Y\subseteq R$.
A relation $r$ over a relation schema containing $X\cup Y$ satisfies the FD $X\ra Y$ if the values of the attributes in $X$ uniquely determine the values of the attributes in $Y$. That is, whenever two tuples of $r$ agree on all attributes of $X$, they must also agree on all attributes of $Y$. More formally, $r$ satisfies $X\ra Y$ if, for all tuple identifiers $i$ and $i'$ in $\tids(r)$, if $r[i][X]=r[i'][X]$ then $r[i][Y]=r[i'][Y]$.
For a set $\Delta$ of FDs over $\atts(r)$, we denote by $r\models \Delta$ the statement that $r$ satisfies every FD in $\Delta$.

An FD $X \ra Y$ is \e{unary} if $X$ consists of a single attribute, and it is \e{trivial} if $Y\subseteq X$, in which case it is satisfied by every relation. A \e{matching constraint} is a constraint of the form $X\lra Y$, which is shorthand for the pair $\set{X\ra Y, Y\ra X}$. An attribute $A$ is a
\e{common left-hand side} (common lhs) of $\depset$ if $A$ occurs in the
left-hand side of every FD in $\depset$.

In our examples, we adopt the standard convention of omitting curly braces and commas when writing sets of attributes. For example, we write $AB$ instead of $\set{A,B}$. We also use a compact notation for sets of FDs by combining multiple dependency expressions and possibly varying the direction of the arrows. For example,
$A\lra B\la CD$ is shorthand for the set 
$\set{A\ra B,B\ra A,CD\ra B}$.
The closure of a set $\Delta$ of FDs, denoted $\Delta^+$, is the set of all FDs implied by $\Delta$ (equivalently, all FDs that can be derived from $\Delta$ using Armstrong's axioms). In particular, $\Delta^+$ contains all trivial FDs. 

The closure of a finite set $X$ of attributes with respect to $\Delta$, denoted $\closure{X}{\Delta}$, is the set of all attributes $A$ such that $X\ra A$ belongs to $\Delta^+$. Two finite sets $X$ and $Y$ of attributes are \e{equivalent} (with respect to $\Delta$) if they have the same closure, that is, if $\closure{X}{\Delta}=\closure{Y}{\Delta}$ (or, equivalently, $X\ra Y$ and $Y\ra X$ both belong to $\Delta^+$). By a slight abuse of notation, we say that two attributes $A$ and $B$ are \e{equivalent} if the singleton sets $\set{A}$ and $\set{B}$ are equivalent.

Finally, for a set $\Delta$ of FDs, we denote by $\atts(\Delta)$ the set of all attributes that appear in either the left-hand or right-hand side of an FD in $\Delta$.

\subparagraph*{Strongly Connected Components (SCCs) of attributes.}
When $\Delta$ is a set of unary FDs, it is convenient to view it as a directed graph over its attributes. Since a relation satisfies $\Delta$ if and only if it satisfies its closure $\Delta^+$, the two induce exactly the same repair problem; we therefore define the graph over $\Delta^+$, which is more convenient to work with.

\begin{definition}[FD graph]\label{def:fd-graph}
Let $\Delta$ be a set of unary FDs over $R$. The \e{FD graph} of $\Delta$, denoted $\graph(\Delta)=(\nodes,\edges)$, is the directed graph whose node set $\nodes$ is the set $\atts(\Delta)$ of attributes occurring in $\Delta$, and in which $(A,B)\in\edges$ if and only if $A\neq B$ and $A\ra B\in\Delta^+$.
\end{definition}

The \e{strongly connected components} (SCCs) of $\Delta$ are the sets of FDs induced by the SCCs of $\graph(\Delta)$. Hence, the SCCs are the equivalence classes of $\atts(\Delta)$ under the aforementioned equivalence relation on attributes: two attributes lie in the same SCC if and only if they are equivalent, that is, each determines the other.
Since $\graph(\Delta)$ is transitively closed, the SCCs are the maximal complete directed subgraphs.

The \e{weakly connected components} of $\Delta$ are the sets of FDs induced by the weakly connected components of $\graph(\Delta)$; that is, two FDs lie in the same weakly connected component whenever their attributes are joined by a path in the undirected graph underlying $\graph(\Delta)$. We say that $\Delta$ is \e{weakly connected} if $\graph(\Delta)$ consists of a single weakly connected component.

\section{Optimal Update Repairs}\label{sec:main-result}

In this section, we formally introduce the computational problem studied in
this paper, that of computing an \e{optimal update repair} of an inconsistent
relation. We also give an overview of the state of affairs for this problem.

\subsection{Problem Definition}

As is standard in the study of
repairs, we measure complexity in the sense of \e{data complexity}: the
relation schema $R$ and the set $\depset$ of FDs are \e{fixed}, and the input
consists of a single relation $r$ over $R$. 

\subparagraph*{Updates and optimal U-repairs.}
Let $r$ be a relation over a schema $R$. An \e{update} of $r$ is a relation
$r'$ over $R$ obtained from $r$ by modifying attribute values, without
inserting or deleting tuples; formally, $r'$ is an update of $r$ if $\tids(r')=\tids(r)$. Note that the values used in an
update are not restricted to the \e{active domain} of $r$ (the values already
occurring in $r$); any value in $\vals$ may be used. Following prior work~\cite{DBLP:conf/icdt/KolahiL09,DBLP:journals/tods/LivshitsKR20}, we define the \e{distance} from an
update $r'$ to $r$, denoted $\distu(r',r)$, as the total number of cells in
which $r'$ differs from $r$:
$$
  \distu(r',r)\;\eqdef\;
  \sum_{i\in\tids(r)}
  \bigl|\set{A\in\atts(r)\mid r'[i][A]\neq r[i][A]}\bigr|\,.
$$
\begin{remark}
We may also wish to capture scenarios in which modifying certain cells is more costly than modifying others. 
To this end, the definition of distance naturally extends to a \emph{weighted} distance, in which each cell is assigned a specific update cost. For clarity of presentation, we focus on the unweighted distance defined above. However, our results immediately extend to this weighted case, as we explain in \Cref{sec:conclusions}. \remarkend
\end{remark}

Fix a set $\depset$ of FDs over $R$. A \e{consistent update} of $r$ (with
respect to $\depset$) is an update $r'$ of $r$ such that $r'\models\depset$.

\begin{definition}[Optimal update repair]\label{def:opt-u-repair}
An \e{optimal update repair} of $r$ (with respect to $\depset$), or \e{optimal
U-repair} for short, is a consistent update $r'$ of $r$ that minimizes
$\distu(r',r)$ among all consistent updates of $r$; that is,
$\distu(r',r)\le\distu(r'',r)$ for every update $r''$ of $r$ with
$r''\models\depset$.
\end{definition}

The computational problem we study is that of producing such an
optimum.

\comptask{Computing an optimal U-repair for $(R,\depset)$}
{A relation $r$ over $R$.}
{Compute an optimal U-repair of $r$ with respect to $\depset$.}

%
Note that every pair $(R,\depset)$ defines a separate computational problem. A straightforward observation is that $R$ does not play any role in the complexity of this problem, as long as it contains all of the attributes in $\atts(\Delta)$. Our main result, stated formally as Theorem~\ref{thm:main_dichotomy} in Section~\ref{sec:classification}, is a complete dichotomy in the complexity of this problem for the case in which $\depset$ consists of unary FDs. Specifically, we classify all such $\depset$ into ones where an optimal U-repair can be found in polynomial time, and ones that induce an NP-hard optimization problem. To present this result, we first need to review the state of affairs for this problem.

\subsection{State of Affairs}\label{subsec:state-of-affairs}

The complexity of computing optimal repairs by value updates was first studied
by Kolahi and Lakshmanan~\cite{DBLP:conf/icdt/KolahiL09} and later, in a systematic
way, by Livshits et al.~\cite{DBLP:journals/tods/LivshitsKR20}. The
latter work develops a rich toolbox for the problem, yet the exact tractability
frontier for update repairs has remained elusive; indeed, in contrast to the
full dichotomy they establish for \e{subset} repairs, the authors explicitly
leave the complexity of optimal \e{update} repairs largely open. We summarize
below what is known, focusing on unary FDs.

\subparagraph*{Decomposition.}
Livshits et al.~\cite{DBLP:journals/tods/LivshitsKR20} establish a
\e{decomposition} property: the optimal U-repair problem for $\depset$ can be
solved by partitioning $\depset$ into its weakly connected components,
computing an optimal U-repair for each component independently, and combining
the resulting cell updates, which are guaranteed to touch pairwise disjoint
sets of attributes. Consequently, the problem for $\depset$ is solvable in
polynomial time whenever it is solvable in polynomial time for every weakly
connected component of $\depset$; conversely, if the problem is NP-hard for even
a single weakly connected component, then it is NP-hard for $\depset$ as a
whole. 

\subparagraph*{Two tractable cases.}
Livshits et al.~\cite{DBLP:journals/tods/LivshitsKR20} further establish the
tractability of two families of FD sets. The first consists of the \e{chain} FD
sets: $\depset$ is a \e{chain} if its FDs can be arranged so that the left-hand
side of each is contained in the left-hand side of the next. For \e{unary} FDs
this criterion is restrictive: since every left-hand side is a single attribute,
all FDs of a chain must share the \e{same} left-hand-side attribute, so the case
collapses to $\depset$ having a \e{common lhs}. The second family is the
matching $\set{A\ra B,\,B\ra A}$. Combined with the aforementioned decomposition property, this yields the following
state of knowledge on the positive side: the optimal U-repair problem is in
$\Ptime$ if every weakly connected component of $\depset$ has a common lhs or
is a matching. 

\subparagraph*{Known intractable case.}
On the negative side, the only FD set previously known to induce an NP-hard
optimal U-repair problem is the directed path $\set{A\ra B,\,B\ra C}$, whose
hardness was established by Kolahi and Lakshmanan~\cite{DBLP:journals/tods/LivshitsKR20}. This single
intractable component, together with the decomposition property, is the extent
of the known hardness landscape.

Two attempts to enlarge the map of intractable cases have appeared, both of
which turned out to be flawed. First, Livshits et
al.~\cite{DBLP:conf/pods/LivshitsKR18} provided an NP-hardness argument for
the FD set $\set{A\ra B,\,B\ra A,\,B\ra C}$, but subsequently acknowledged an error in that proof~\cite{DBLP:journals/tods/LivshitsKR20}.
Second, Miao et al.~\cite{DBLP:journals/vldb/MiaoZLWC23} claimed NP-hardness for the FD set $\set{A\ra C,\,B\ra C}$; as we discuss in
Section~\ref{sec:flaw-acbc}, their proof contains a flaw.

\subsection{Main Result}\label{sec:classification}
We make a significant step toward resolving the long-standing open problem of the complexity of U-repairs: we
provide a \e{complete} complexity classification of the optimal U-repair
problem for the case of unary FDs. As it turns out, the two polynomial-time cases identified by Livshits et
al.~\cite{DBLP:journals/tods/LivshitsKR20} are, in fact, \e{the only} tractable
cases (assuming $\Ptime\neq\NP$): for unary FDs, every FD set that is not
captured by these two cases gives rise to an NP-hard optimal U-repair problem. Note that trivial FDs never take part
in a violation and can be removed from $\depset$ without changing the set of
consistent updates of any relation; therefore, we assume throughout that
$\depset$ contains no trivial FDs.

\begin{theorem}\label{thm:main_dichotomy}
Let $\depset$ be a set of nontrivial unary
FDs. An optimal U-repair can be computed in polynomial time if every weakly connected component $\depset'$ of $\depset$ has a common lhs ($A\ra X$) or
is a matching constraint ($A\lra B$). Otherwise, computing an optimal U-repair is
NP-hard.
\end{theorem}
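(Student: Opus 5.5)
The tractable side follows directly from the known results. By the decomposition property of Livshits et al., it suffices to handle each weakly connected component separately. A component with a common lhs is, up to equivalence, a chain, and a matching constraint is covered by the matching algorithm. Both are polynomial. For the hardness side, I will again use decomposition: it suffices to show that every weakly connected set $\depset$ of nontrivial unary FDs that has no common lhs and is not a matching is NP-hard. I will work with the FD graph $\graph(\depset)$, which is transitively closed. Its SCCs form a DAG.

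The plan is to reduce an arbitrary such $\depset$ to a small finite list of minimal hard patterns, via a ``restriction'' lemma. If $\depset'$ is obtained from $\depset$ by (i) keeping only attributes in a subset $S$ that is closed in a suitable sense, or (ii) merging equivalent attributes / collapsing an SCC to a single attribute, then hardness of $\depset'$ transfers to $\depset$. The reduction from $\depset'$ to $\depset$ fixes the extra attributes to constants, or copies the value of a representative attribute into all equivalent attributes. To keep such copies from being updated independently, I will use cell weights, or equivalently replicate tuples/use heavy duplicates so that changing a copy is never beneficial.

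The case analysis on the condensation DAG is as follows.
\begin{enumerate}
\item If some path of length two between distinct SCCs exists, say $A\ra B\ra C$ with $C\not\ra B$, $B\not\ra A$, then we project to $\set{A\ra B,B\ra C}$ (plus the implied $A\ra C$). Kolahi--Lakshmanan hardness applies.
\item Otherwise the DAG has depth at most one. If there are two sources $A,B$ sharing a sink $C$, the pattern is $\set{A\ra C,B\ra C}$.
\item If there is a source SCC of size at least two with an outgoing edge, the pattern is $A\lra B\ra C$.
\item If the condensation is a single SCC of size at least three, the pattern is $A\lra B\lra C$.
\end{enumerate}
Weak connectivity together with the absence of a common lhs, and the fact that $\depset$ is not a single matching, guarantee that one of these patterns appears. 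A single source SCC of size one with arbitrary out-edges is exactly a common lhs. So the combinatorial step is a short check.

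The main obstacle is proving NP-hardness of the three new base cases, $\set{A\ra C,B\ra C}$, $A\lra B\ra C$ and the triangle $A\lra B\lra C$; previous attempts at the first two failed. For $\set{A\ra C,B\ra C}$ I would first try a reduction from a cut/clustering problem such as MAX-CUT, or from 3-dimensional matching. Values of $A$ and $B$ play the role of vertices, and tuples play the role of edges, with $C$-values as colors. The key difficulty is ensuring that an optimal repair never changes $A$ or $B$ cells, so that only $C$ is updated. I would enforce this with many duplicate tuples per vertex that anchor its $C$-value. For the matching-type patterns I would reduce from a bipartite-like problem that becomes hard once a third attribute is added, e.g. 3D-matching for the triangle. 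Then I would transfer to $A\lra B\ra C$ by showing that an optimal repair can be assumed to preserve the $A$--$B$ bijection structure. I expect the gadget correctness arguments, which must show that no unexpected fresh-value updates beat the intended ones, to be the bulk of the work.
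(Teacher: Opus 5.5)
\textbf{Gap 1: collapsing SCCs.} Your overall architecture matches the paper's: decomposition, a restriction lemma, and a reduction to finitely many minimal hard patterns. But step (ii) of your restriction lemma fails, and your case analysis depends on it.

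\begin{itemize}
\item Copying a representative's value into the other attributes of its SCC makes one source cell correspond to a group of target cells. A target update repair may change only part of a group (say $A'$ but not $A$), or change the copies to different values. Such a repair corresponds to no source repair, so there is no cost-preserving back-translation.
\item Cell weights would only prove hardness of the \emph{weighted} problem, which does not imply the unweighted theorem.
\item Duplicating tuples scales all cells of a tuple equally, so it cannot make the copies heavier than the representative. The paper's appendix explains that this is exactly why fact-wise reductions are unavailable here.
\end{itemize}

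The paper never merges attributes. Its convex-set reduction fills attributes outside $S$ with a constant (if determined by $S$) or with fresh values. Whenever an extra attribute of a nontrivial SCC must be added ($\scdepset{3}$ to $\scdepset{c}$, and $A\lra B\ra C$ to $A\lra B\ra C\lra D$), it pads that attribute with private values guarded by enforcing groups larger than the budget, with a separate exchange argument each time. Without collapsing, your cases 1--3 break whenever the chosen SCCs are not singletons.

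The paper avoids this by splitting on maximal SCC size:
\begin{itemize}
\item an SCC with three or more attributes is itself convex and gives $\scdepset{c}$;
\item if the maximum is two, a two-attribute SCC together with a $\prec$-minimal successor or $\prec$-maximal predecessor is convex, giving $A\lra B\ra C$, $A\ra B\lra C$ or $A\lra B\ra C\lra D$;
\item only when all SCCs are singletons are the path and in-star used.
\end{itemize}

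\textbf{Gap 2: the case analysis is incomplete.} Your claim that a singleton source with arbitrary out-edges is a common lhs is false. Take $A\ra B\lra C$. It is weakly connected and not a matching. It is not equivalent to any set with a common lhs, since FDs with lhs $A$ cannot imply $B\ra C$. Yet none of your four cases applies to it, nor to a singleton source determining an SCC of three or more attributes (your case 4 requires the whole condensation to be one SCC). Collapsing $A\ra B\lra C$ yields the tractable $A\ra B$, so it is a genuinely new minimal hard pattern needing its own reduction; the paper uses 3SAT.

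\textbf{Gap 3: a path need not give a closed triple.} In $\set{A\ra B,B\ra C,A\ra D,D\ra C}$, neither $\set{A,B,C}$ nor $\set{A,D,C}$ is convex. Filling the missing attribute with a constant forces the $C$-column to be constant, and filling it with fresh values violates the FD from $A$. Only the in-star on $\set{B,D,C}$ is usable. The paper picks a node with two in-neighbours and takes its two rightmost in-neighbours in a topological order, precisely to get a convex path or in-star.

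\textbf{Gap 4: the base cases are not proved.} You rightly call these the core, but you only name candidate source problems. The paper needs:
\begin{itemize}
\item for $A\ra B\la C$: three colours and a reduction from three-terminal multiway cut on bipartite graphs; an anchored two-colour version of your encoding only expresses $s$--$t$ cuts;
\item for $\scdepset{3}$: a restricted 3DM with limited intersection and triangle-freeness, plus a savings count;
\item for $A\lra B\ra C$: a direct reduction from Vertex Cover.
\end{itemize}
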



As aforesaid, the tractable side of Theorem~\ref{thm:main_dichotomy}, the case in which every
weakly connected component has a common lhs or is a matching, has already been
settled by Livshits et al.~\cite{DBLP:journals/tods/LivshitsKR20}. Our
contribution is therefore the hardness side, which we establish for the natural
decision variant of the problem: given a relation $r$ and a budget
$K\in\nat$, decide whether there is a consistent update $r'$ of $r$ with
$\distu(r',r)\le K$. 

\subsection{Proof Strategy}
We prove the hardness side of \Cref{thm:main_dichotomy} using two main steps. In the first step, we establish the NP-hardness for several specific cases. 
One case is where the FD set is a single SCC with three or more attributes (\Cref{sec:large-scc}). 
Other cases are specific FD sets involving three or four attributes (\Cref{sec:small-sccs}). 
Among the FD sets that we handle are
those whose complexity has been left open following published proofs that were found flawed, as we discussed in \Cref{subsec:state-of-affairs}; hence, this part of the proof resolves these open problems.
The second part (\Cref{sec:scc-reduction}) extends the specific cases of the first part to the general case of a set of unary FDs. We first show that whenever the FD set does not belong to the positive side of \Cref{thm:main_dichotomy}, there is a set of attributes, called \e{convex}, that induces one of the special cases proved hard in the first part. Finally, we show a reduction from the U-repair problem for any convex set of attributes to the original problem (involving all attributes).

\section{Three or More Strongly Connected Attributes}\label{sec:large-scc}

We begin our analysis with the case of a strongly connected set of FDs (consisting of a single SCC). Recall that $\depset$ is \e{strongly connected} if it is a set of pairwise-equivalent attributes. We denote by $\scdepset{c}$ the strongly connected FD set with $c$ attributes. (Note that all such FD sets are equivalent.) Specifically, we assume that 
\[
\scdepset{c} \defeq \set{A_i\ra A_j \mid i,j\in\set{1,\dots,c}  \land i\neq j}
\mbox{\quad with \quad} \atts(\scdepset{c})=\set{A_1,\dots,A_c}
\,.
\]

\begin{theorem}\label{thm:kc-hard}
For every $c \ge 3$, computing an optimal U-repair for $\scdepset{c}$ is NP-hard.
\end{theorem}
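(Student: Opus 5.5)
The key observation is that, for $\scdepset{c}$, consistency has a rigid combinatorial form. If two tuples agree on \emph{any} attribute, they must agree on \emph{all} attributes. So a consistent update partitions $\tids(r)$ into classes of identical tuples. Two different classes disagree on every attribute $A_1,\dots,A_c$.

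The cost of a class $S$ whose common tuple is $v$ is $\sum_{i\in S}|\set{A\mid r[i][A]\neq v[A]}|$. We may assume that $v[A]$ is either a value of $r$ or a fresh value. Hence an optimal U-repair amounts to choosing a family of classes with pairwise ``attribute-disjoint'' representative tuples. I will reduce from a matching-type NP-complete problem, namely 3-dimensional matching (3DM) with disjoint sets $X,Y,Z$, $|X|=|Y|=|Z|=q$, and a triple set $T$. I will first treat $c=3$ and then lift the reduction to every $c\ge 3$.

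\textbf{Construction for $c=3$.} For each triple $(x,y,z)\in T$ I create one \emph{triple tuple} $(x,y,z)$. To make covering every element mandatory, I attach to each element $e$ a block of $M$ \emph{anchor tuples} that carry $e$ in its coordinate and fresh values elsewhere, with $M$ large (e.g.\ $M=|T|+1$). For instance, $x$ gets $(x,\bot^x_j,\bot'^x_j)$ for $j=1,\dots,M$. The anchors of $x$ are pairwise disjoint except on $x$, so some care is needed. The intended repair proceeds in three steps. First, choose a perfect matching $T'\subseteq T$ and keep its triple tuples. Second, merge each anchor of $e$ into the class of the matching triple containing $e$, at cost $2$ per anchor. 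Third, route every non-matching triple tuple into some class, at cost exactly $2$: it shares a coordinate with some matching triple, and we overwrite the other two coordinates. This gives a fixed budget $K=2\cdot 3qM+2(|T|-q)$.

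\textbf{Soundness direction.} This is the main obstacle. Given a repair of cost at most $K$, I want to extract a perfect matching. The plan is to establish a normal form by exchange arguments. (a) Every class containing anchors of element $e$ uses $e$ in that coordinate; otherwise each of the $M$ anchors pays an extra unit, which exceeds the slack. (b) Each anchor costs at least $2$, because distinct anchors of the same element cannot all be kept. So anchors alone already consume $6qM$. (c) Consequently the classes determined by the $3q$ elements must be realized by exactly $q$ representative vectors $(x,y,z)$. Each such vector must either coincide with a triple tuple of $T$, or else its class costs at least one extra unit per anchor, again beyond the slack. (d) The remaining triple tuples each cost at least $2$. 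Together these give a perfect matching.

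I expect the delicate part to be ruling out ``mixed'' classes and cheap cost-$1$ merges. I would tune $M$ and possibly give the anchors weight through multiplicity (duplicate tuples are allowed) so that every deviation from the normal form costs more than the total contribution of the triple tuples. Only the triple tuples remain as genuine choices.

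\textbf{Lifting to $c>3$.} I will set $A_j\defeq A_3$ for all $j>3$ in every tuple, i.e.\ I duplicate the third column. A consistent update for $\scdepset{3}$ extends to one for $\scdepset{c}$ by copying, and the cost is multiplied per cell by the number of copies. Conversely, from any consistent update for $\scdepset{c}$ I will show, by a local averaging/exchange argument within each class, that replacing the values of the copy columns with the class's $A_3$ value does not increase cost. So the problem is equivalent to a weighted version of $c=3$ in which $A_3$ has weight $c-2$. Because of this, I will carry out the $c=3$ soundness argument directly with attribute weights $(1,1,c-2)$ and set the budget accordingly. This requires only recomputing the per-anchor and per-triple costs in the normal-form lemma.
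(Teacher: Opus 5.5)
\textbf{There is a genuine gap: your $c=3$ reduction is not sound, because the element anchors do not force anything.} An anchor of $x$ carries, besides $x$, two values that occur nowhere else in $r$. It can therefore be repaired at cost $1$ by overwriting only its element cell with a fresh value, after which it forms a singleton class that disagrees with every other class on all attributes. So claims (a) and (b) are false: an anchor costs at most $1$, not at least $2$. Your intended repair, which merges each anchor into a matching triple at cost $2$, is never optimal.

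Concretely, overwrite the element cell of every anchor with a fresh value, and overwrite all three cells of every triple tuple with fresh values. This is a consistent update of cost $3qM+3|T|$. For $M=|T|+1$ and $q\ge1$, this is below your budget $K=6qM+2(|T|-q)$ by $(3q-1)|T|+q>0$. Hence every instance is a yes-instance.

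Making the anchors identical copies does not rescue the idea. Leaving them untouched is then cheapest, so their own tuple becomes a target, and $x$ is blocked for every triple tuple rather than forced to be covered. More fundamentally, nothing in your plan distinguishes a target $(x,y,z)\in T$ from an arbitrary vector. An anchor pays the same in either case, and the only reward for a triple of $T$ is a saving of $2$ on its single triple tuple. In plain 3DM, an arbitrary target can match that by collecting savings of $1$ from several triple tuples that share a pair with it.

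\textbf{The missing idea is that coverage should come from a tight counting argument, not from anchoring elements.} The paper reduces from R3DM: 3DM restricted to instances where two triples share at most one element (LI) and no three triples form a ``triangle'' of pairs (TF). It proves R3DM NP-complete by replacing each triple with a five-triple gadget. Each triple $s$ then yields its $S$-tuple and three auxiliary tuples; each auxiliary tuple agrees with $s$ on two attributes and carries a fresh constant in the third. The enforcing groups of $N=K+1$ identical copies anchor these fresh constants, not the elements, so no in-budget target can use them.

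With $K=8m-5q$, the total saving must be at least $5q$. LI and TF bound the saving of a cluster by $0$, $2$, $4$ or $5$, according to how many pairs of its target co-occur in triples. Since distinct targets use disjoint elements, $3q\ge 2n_b+3n_c+3n_d$. Together these force exactly $q$ clusters whose targets are triples of $S$, i.e., a perfect matching.

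Your column-duplication lift to $c>3$ is workable, with two caveats. First, the normalization must choose one copy column globally: the one retaining the most cells overall, whose values are pairwise distinct across classes. Independent per-class choices can make two classes collide. Second, it leaves you needing hardness of a weighted three-attribute problem. The paper avoids this by padding each tuple with private values guarded by enforcing groups, which shifts the budget by exactly $n(c-3)$.
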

The proof of \Cref{thm:kc-hard} is nontrivial, and 
the rest of this section sketches the proof. The complete construction and formal analysis are provided in the Appendix. Importantly, the combinatorial core of the proof lies in the case of three attributes. We establish this case by a reduction from a restricted variant of three-dimensional matching, and we then reduce it to $\scdepset{c}$ for every $c>3$ by padding each tuple with values that no repair within the budget can preserve. Hence, the challenge of the proof is concentrated on the case of $c=3$, and it is the case that we sketch here.

The fact that all attributes in $\scdepset{c}$ determine one another imposes a strict structure on consistent relations. If two tuples of a relation $r \models \scdepset{c}$ agree on even a single attribute $A_i$, then they must agree on all attributes $A_1,\dots,A_c$. Equivalently, every two tuples are either completely identical on $A_1,\dots,A_c$, or they disagree on every single one of them.  Consequently, a consistent relation partitions its tuples into clusters of identical tuples. Each cluster is defined by a single \e{target tuple}
$v\in\vals^c$ shared by all its members, and the assignments of any two distinct clusters must disagree in all $c$ coordinates. Repairing $r$ is thus the act of assigning a target tuple $v$ to every tuple: a tuple $t$ modified to match $v$ is charged one unit for each coordinate on which $t$ and $v$ 
differ (that is, $c$ minus the number of original cells it retains). Computing an optimal U-repair for $\scdepset{c}$ is therefore the problem of partitioning the tuples into these disjoint clusters and modifying each cluster to a joint $v$, which is unique on every attribute, while preserving as many original cells as possible.

Restricted to three attributes, this clustering problem is closely
related to the \e{3-Dimensional Matching} 
(3DM) problem. Consider the values that occur in these three attributes as elements of three disjoint sets $X$, $Y$, and
$Z$, so that every tuple is a triple in $X\times Y\times Z$. In 3DM, one is given such sets, each of size $q$, and a set $S\subseteq X\times Y\times Z$ of triples, and the goal is to select a subset of $S$ that touches every element of $X\cup Y\cup Z$ exactly once; equivalently, we ask whether we can find $q$ pairwise-disjoint triples. Our repair problem has a similar shape: a repair, too, selects a set of target triples that must be pairwise disjoint, and it pays nothing for a tuple it leaves unchanged and one unit for each cell it modifies. 
However, in 3DM, the chosen triples must belong
to $S$, whereas a repair may use \e{any} triple of $\vals^3$ as a target triple. 
To this aim, we define a restricted form of 3DM.

\subparagraph*{Restricted matching problem (R3DM).}
The restricted variant we reduce from, R3DM, imposes two structural
conditions on the input triples of 3DM:
\begin{itemize}
  \item  Limited Intersection (LI): every two distinct triples share at most one element;
  \item Triangle-freeness (TF): there are no three distinct triples and
        three elements $x,y,z$ such that one triple contains
        $\set{x,y}$, another $\set{x,z}$, and the third
        $\set{y,z}$.
\end{itemize}
We prove in the Appendix that R3DM is still NP-complete.
As we explain later, our reduction relies heavily on the LI and TF properties for correctness, specifically to show that an optimal U-repair yields a perfect matching.


\subparagraph*{Reduction gadget.}

\begin{figure}[t]
\centering
\begin{tikzpicture}[
    cell/.style={rectangle split, rectangle split parts=3,
                 rectangle split horizontal, draw=black!75, semithick,
                 rounded corners=1pt,
                 minimum height=5.5mm, text width=9mm, align=center,
                 font=\small, inner sep=1.5pt, fill=white},
    rowname/.style={font=\small, anchor=east},
    colname/.style={font=\footnotesize\itshape, text=black!70},
    blockname/.style={font=\footnotesize\itshape, text=black!70},
    ann/.style={font=\footnotesize, anchor=west, text=black!70},
    anchorpin/.style={->, dashed, semithick, red!65!black}
  ]
  \def\el#1{\textcolor{blue!55!black}{#1}}
  \def\fc#1{\textcolor{red!70!black}{#1}}
  \def\fl#1{\textcolor{black!70}{#1}}

  \node[cell] (t) at (0,0)
    {$\el{x}$ \nodepart{two} $\el{y}$ \nodepart{three} $\el{z}$};
  \node[cell, below=1.6mm of t] (u1)
    {$\fc{c_1}$ \nodepart{two} $\el{y}$ \nodepart{three} $\el{z}$};
  \node[cell, below=1.6mm of u1] (u2)
    {$\el{x}$ \nodepart{two} $\fc{c_2}$ \nodepart{three} $\el{z}$};
  \node[cell, below=1.6mm of u2] (u3)
    {$\el{x}$ \nodepart{two} $\el{y}$ \nodepart{three} $\fc{c_3}$};

  \node[rowname, left=1.2mm of t]  {$t_s$};
  \node[rowname, left=1.2mm of u1] {$u^1_s$};
  \node[rowname, left=1.2mm of u2] {$u^2_s$};
  \node[rowname, left=1.2mm of u3] {$u^3_s$};

  \node[colname, above=1mm of t.one north]   {$A_1$};
  \node[colname, above=1mm of t.two north]   {$A_2$};
  \node[colname, above=1mm of t.three north] {$A_3$};

  \node[cell, right=18mm of u1] (e1)
    {$\fc{c_1}$ \nodepart{two} $\fl{d_{1,2}}$ \nodepart{three} $\fl{d_{1,3}}$};
  \node[cell, below=1.6mm of e1] (e2)
    {$\fl{d_{2,1}}$ \nodepart{two} $\fc{c_2}$ \nodepart{three} $\fl{d_{2,3}}$};
  \node[cell, below=1.6mm of e2] (e3)
    {$\fl{d_{3,1}}$ \nodepart{two} $\fl{d_{3,2}}$ \nodepart{three} $\fc{c_3}$};

  \node[ann, right=1.2mm of e1] {$\times N$};
  \node[ann, right=1.2mm of e2] {$\times N$};
  \node[ann, right=1.2mm of e3] {$\times N$};

  \draw[anchorpin] (u1.east) -- (e1.west);
  \draw[anchorpin] (u2.east) -- (e2.west);
  \draw[anchorpin] (u3.east) -- (e3.west);

  \node[blockname, above=6mm of t] (gtitle) {Gadget of $s=(x,y,z)$};
  \node[blockname] at (e1 |- gtitle) {Enforcing groups};
\end{tikzpicture}
\caption{The tuples that the reduction introduces for a triple $s=(x,y,z)$. Each
auxiliary tuple inherits two values from the $S$-tuple and contains a fresh constant
in the remaining attribute; the dashed arrows lead to the enforcing group that anchors
this constant.}
\label{fig:construction}
\end{figure}

Next, we describe our reduction, which is illustrated in Figure~\ref{fig:construction}. Fix an \mathsc{R3DM} instance. For each triple $s=(x,y,z)$ in the instance, we build a gadget on the three attributes, as shown in \Cref{fig:construction}.
\begin{itemize}
  \item An \e{$S$-tuple} $(x,y,z)$, that is, $s$ itself;
  \item Three \e{auxiliary tuples}: the auxiliary tuple for a given attribute agrees with the $S$-tuple on the other two attributes, but contains a fresh constant in its own attribute.
\end{itemize}
We refer to these four tuples as the \e{gadget tuples}.
Finally, to prevent fresh constants from being chosen as part of a target triple, each fresh constant is anchored by an \e{enforcing group}: a set of $N$ identical tuples that agree with the auxiliary tuple on the fresh constant and carry, in every other attribute, a value that occurs nowhere else. We set the budget to $K\defeq8m-5q$ and the size of the enforcing groups to $N\defeq K+1$. All in all, the constructed relation $r$ consists of the $4m$ tuples of the $m$ gadgets and the $3m$ enforcing groups.

\subparagraph*{Correctness}
We first show that a perfect matching yields a U-repair.
This direction is fairly simple. Given a perfect matching $M\subseteq S$, we use the triples of $M$ as the target tuples, and update every gadget tuple to be equal to a target tuple that is closest to it. (In general, finding a low-cost U-repair for 
$\scdepset{c}$ boils down to finding a good set of target tuples.) 

Next, we show that an in-budget U-repair yields a perfect matching.
This direction is considerably more challenging. Consider a consistent update $r'$ of cost at most $K$. Every two target tuples are disjoint. We show that $r'$ has at least $q$ clusters where each has a target tuple that is an $S$-tuple (i.e., a triple of the R3DM instance). Hence, these $q$ target tuples form a perfect matching.

To this end, we first show that the enforcing groups prevent the gadget tuples from retaining their fresh constants. Moreover, $r'$ has no reason to change the enforcing tuples. Therefore, the cost of $r'$ is the number of cells changed within the gadget tuples. We also show that there is no reason to update more than two cells in each tuple of $r$. Hence, the cost $c_t$ incurred by each individual gadget tuple $t$ is at most two. For convenience, we refer to $2-c_t$ as the \e{saving} of $t$. We further refer to the \e{saving of a cluster} as the total savings of all tuples in the cluster. With zero savings, we have $4m$ tuples with a total cost of $8m$. Since the total cost of $r'$ is at most $K=8m-5q$, we conclude that the total savings over all clusters are at least $5q$. 

From here on, we analyze the different clusters of $r'$ according to properties of their target tuples $v$. There are four types:
\begin{itemize}
    \item[(a)] No pair of values in $v$ co-occurs in any triple of $S$.
    \item[(b)] Exactly one pair in $v$ co-occurs in a triple of $S$.
    \item[(c)] Exactly two pairs in $v$ co-occur in triples of $S$.
    \item[(d)] $v$ is an $S$-tuple.
\end{itemize}
Due to the TF property, these types are pairwise disjoint, and they cover all possibilities of clusters of $r'$. We show a bound on the savings of each cluster, according to its type. A numerical analysis then shows that, to reach the savings of $5q$ (or more), there can be no clusters of type (a)-(c), and there are precisely $q$ clusters of type (d). As aforesaid, the corresponding $q$ target tuples form a perfect matching, as required.

\subparagraph*{Beyond three attributes.}
To extend this construction to relations of arity $c$, we select three arbitrary attributes to carry the R3DM gadgets and their matching logic. For each of the remaining $c-3$ attributes, we assign a unique fresh constant to every tuple, and guard this constant with a dedicated enforcing group. This forces every consistent update within the budget to modify all values in these additional attributes, at a fixed, uniform cost. Since this extra cost is independent of the other choices, it merely shifts the budget by a constant, leaving the equivalence with the existence of a perfect matching unaffected.

\section{Case Analysis for Small SCCs}\label{sec:small-sccs}

Theorem~\ref{thm:kc-hard} settles the FD sets that are strongly connected and have
three or more attributes. In this section, we turn to the FD sets in which every SCC of
$\graph(\depset)$ consists of one or two attributes. An SCC of a single attribute
induces no nontrivial FDs at all, and an SCC of two attributes $A$ and $B$ induces the
matching constraint $A\lra B$, which is one of the two tractable cases of
Theorem~\ref{thm:main_dichotomy}. Hence, unlike the situation of
Section~\ref{sec:large-scc}, intractability can no longer originate in a single SCC; it
can arise only from the way in which several SCCs are \e{combined}, that is, from the
FDs that hold between attributes of distinct SCCs. The FD sets that neither section
covers, namely those that have an SCC of three or more attributes without being
strongly connected themselves, are handled in Section~\ref{sec:scc-reduction}, where we
show that an FD set inherits the hardness of a convex subset and, in particular, that
of any of its SCCs.

We establish hardness for the following combinations of small SCCs:
\begin{itemize}
  \item $A\ra B\ra C$: three single-attribute SCCs forming a directed path;
  \item $A\ra B\la C$: two single-attribute SCCs that determine a third;
  \item $A\lra B\ra C$: a matching constraint that determines a single attribute;
  \item $A\ra B\lra C$: a single attribute that determines a matching constraint;
  \item $A\lra B\ra C\lra D$: a matching constraint that determines another matching
        constraint.
\end{itemize}
The
first combination, the directed path, is already known to be intractable
\cite{DBLP:conf/icdt/KolahiL09}, as discussed in
Section~\ref{subsec:state-of-affairs}, so we prove hardness for the remaining four. We
give the full proof for $A\ra B\la C$; for the other three, we give only a proof
sketch here, and the complete construction and formal analysis are provided in the
appendix.

\subsection{Hardness of $A\ra B\la C$}\label{sec:flaw-acbc}

We begin with $A\ra B\la C$, that is, the FD set $\set{A\ra B,\,C\ra B}$, in which two
independent attributes determine a third. As mentioned in
Section~\ref{subsec:state-of-affairs}, Miao et
al.~\cite{DBLP:journals/vldb/MiaoZLWC23} have already claimed that this FD set is
intractable. They give two arguments, and we begin by explaining why each of them is
flawed.

The first argument is based on the \e{conflict hypergraph} of a relation $r$: the
vertices are the cells of $r$, and the hyperedges are the sets of cells that jointly
witness a violation of an FD of $\depset$. The authors construct a reduction \e{from} the optimal U-repair problem \e{to} minimum
vertex cover over conflict hypergraphs, and intractability is then deduced from the
NP-hardness of the latter. This deduction reverses the direction of the reduction. Observe also
that the hardness argument is stated for every $\depset$ that contains two FDs $X\ra C$ and
$Y\ra C$ with $X\neq Y$, and in this generality it cannot hold (unless $\mbox{P}=\mbox{NP}$): the FD set
$\set{A\ra C,\,AB\ra C}$ satisfies this condition, yet its left-hand sides form a
chain, since $A\subseteq AB$, and so an optimal U-repair for it can be computed in
polynomial time~\cite{DBLP:journals/tods/LivshitsKR20}.

The second argument is a direct reduction from \mathsc{Max NM-E3SAT}, the problem of
satisfying as many clauses as possible of a $3$CNF formula $\phi$ in which every
clause consists of either three positive literals or three negative literals. Given
such a formula with $m$ clauses, the constructed relation over $\set{A,B,C}$ contains,
for every clause $c$ and every variable $x$ occurring in $c$, the tuple $(c,1,x)$ if
$x$ occurs positively in $c$, and the tuple $(c,0,x)$ if it occurs negatively. The
analysis concludes that the minimum cost of a consistent update of this relation is
precisely $3m-s$, where $s$ is the maximum number of clauses of $\phi$ that a single
assignment satisfies; as $s\le m$, this cost is at least $2m$. The conclusion is
false. Setting the $B$-value of \e{every} tuple to $1$ makes the $B$-column constant
and, therefore, already produces a consistent relation; its cost is the number of
tuples that carry $0$, that is, three times the number of all-negative clauses.
Symmetrically, setting every $B$-value to $0$ costs three times the number of
all-positive clauses. One of the two is of cost at most $3m/2$, which is strictly
below the claimed minimum.

We now establish the hardness of $A\ra B\la C$.

\begin{proposition}\label{prop:in-star-hard}
Computing an optimal U-repair for $A\ra B\la C$ is NP-hard.
\end{proposition}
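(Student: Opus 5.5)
The plan is to first turn the repair problem for $A\ra B\la C$ into a purely combinatorial labeling problem on a bipartite multigraph, and then reduce a known NP-hard graph problem to that labeling problem. The gadget in the last step is not yet worked out; that is the main gap in this plan.

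\emph{Reformulation.} View a relation $r$ over $\set{A,B,C}$ as a bipartite multigraph $H_r$. Its left nodes are the $A$-values, its right nodes are the $C$-values, and each tuple $(a,b,c)$ is an edge from $a$ to $c$ with colour $b$. A relation satisfies $\set{A\ra B,C\ra B}$ exactly when the $B$-value is constant on every connected component of $H_r$.
\begin{itemize}
\item Every tuple can be made harmless at cost exactly $1$, by replacing its $A$-value with a fresh constant. The tuple then becomes an isolated edge.
\item Hence an optimal repair changes at most one cell per tuple. Its cost equals the number of tuples that are not kept fully intact.
\item Conversely, a set $T$ of tuples can all be kept intact if and only if every connected component of $T$ is monochromatic.
\end{itemize}
Equivalently, each node $x$ of $H_r$ picks a colour $\lambda(x)$, and an edge of colour $b$ is kept exactly when both endpoints picked $b$. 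So the minimum repair cost is $|r|$ minus the maximum, over all labelings $\lambda$, of the number of edges whose colour agrees with both endpoints. I will prove this equivalence first, together with its weighted form, obtained by repeating a tuple $N$ times.

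\emph{Choice of source problem.} Next I reduce from an NP-hard problem with a genuinely non-bipartite conflict structure. My first choice is Maximum Independent Set (equivalently, Vertex Cover) on cubic graphs; 3-colourability is the fallback. The reduction must produce interactions that are not supermodular. With only two colours, the objective tends to reward endpoints that agree, which is a polynomial-time cut-type structure. So the gadgets must use many colours, for example one colour per vertex or per edge of the source graph $G$.

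\emph{Gadgets.} Each vertex $v$ of $G$ gets a vertex gadget: an $A$-node $a_v$ with two possible dominant colours, ``in'' and ``out''. The choice is forced by heavy bundles of $N$ parallel tuples of each option to private $C$-nodes. Each edge $e=uv$ of $G$ gets a $C$-node $c_e$ that can agree with at most one of $a_u,a_v$ in their ``in'' colours. So selecting both endpoints loses an edge that an independent choice would keep. The budget is set to $|r|-(\text{baseline}+k)$, so that a labeling meets it exactly when $G$ has an independent set of size $k$.

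Correctness then splits into two directions.
\begin{itemize}
\item \emph{Independent set to repair.} Given an independent set, read off the labeling directly and count the kept edges.
\item \emph{Repair to independent set.} Show that heavy bundles are never broken within budget, so every gadget is in one of its intended states. Then run a local exchange argument: any labeling can be modified, without loss, into one induced by a vertex subset. Finally bound the number of kept edges by $k$ plus the baseline.
\end{itemize}

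\emph{Main obstacle.} The difficult step is designing the edge gadget so that the labeling objective genuinely penalises choosing both endpoints of an edge. The naive designs (a single shared colour at $c_e$) give the value $\max(\#\text{selected endpoints},1)$ per edge, which rewards choosing both endpoints rather than penalising it. So $c_e$ must be reachable by two \emph{different} colours: colour $u$ via $a_u$ and colour $v$ via $a_v$. In addition, the private ``out'' edges of $a_v$ must be calibrated so that choosing ``in'' pays off only if $v$ wins at every incident $c_e$. If this fails, I would switch to 3-colouring. In that version, each vertex picks one of three colours, each edge node carries a gadget keeping one more tuple when its endpoints' colours differ, and the same reformulation and bundle-forcing argument carries the rest of the proof.
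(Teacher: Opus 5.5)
There is a genuine gap, and it sits in two places.

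\textbf{The reformulation is false.} Replacing the $A$-value of $(a,b,c)$ by a fresh constant does not isolate the edge in $H_r$. The tuple keeps its $C$-value $c$, so $C\ra B$ still forces its colour to be that of $c$'s component. Given the final labeling $\lambda$, the cheapest fix of $(a,b,c)$ costs:
\begin{itemize}
\item $0$ if $\lambda(a)=\lambda(c)=b$;
\item $1$ if exactly one of $\lambda(a),\lambda(c)$ equals $b$, or if $\lambda(a)=\lambda(c)\neq b$;
\item $2$ if $\lambda(a)\neq\lambda(c)$ and neither equals $b$.
\end{itemize}
For example, pin $a$ to colour $1$ and $c$ to colour $2$ with heavy bundles. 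Then the tuple $(a,3,c)$ cannot keep both $A=a$ and $C=c$. Whichever end it keeps, it must also recolour, so two of its cells change. Hence neither ``an optimal repair changes at most one cell per tuple'' nor ``cost $=|r|$ minus the number of kept tuples'' holds. Any gadget calibration done in that model, such as your analysis of the edge node $c_e$, is unreliable.

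\textbf{The reduction itself is missing.} You flag the edge gadget as unresolved, but that gadget is the entire content of the hardness proof. The direction you push in also works against the problem's structure. As you note, interactions between an $A$-node and a $C$-node reward agreement, and you try to manufacture a penalty for agreement in order to encode independent sets. The missing idea is that no such penalty is needed. Your remark that two colours give a min-cut structure is right, and with three colours and pinned terminals the same agreement-rewarding structure is exactly multiway cut with three terminals. That problem is NP-hard, also on bipartite graphs.

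This is what the paper does:
\begin{itemize}
\item For every vertex $v$ it adds $H=2k+1$ anchors $(v,b,v)$ for each colour $b\in\set{1,2,3}$, plus $H$ more anchors of colour $i$ for each terminal $s_i$. This forces each vertex onto a colour $\lambda(v)$ and forces $\lambda(s_i)=i$.
\item Each edge $\set{u,w}$ becomes the three tuples $(u,1,w),(u,2,w),(u,3,w)$. These cost $2$ if $\lambda(u)=\lambda(w)$ and $4$ otherwise. The extra $2$ comes precisely from the cost-$2$ tuple that your reformulation discards.
\item With budget $2nH+2m+2k$, in-budget repairs correspond exactly to multiway cuts of size at most $k$.
\end{itemize}
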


\begin{proof}
We reduce from the multiway-cut problem with three terminals on bipartite graphs. Let
$G=(U,W,E)$ be a bipartite graph with sides $U$ and $W$, and let
$S=\set{s_1,s_2,s_3}\subseteq U\cup W$ be three distinct \e{terminals}. A \e{multiway
cut} of $(G,S)$ is a set $E'\subseteq E$ such that no two terminals are connected in
the graph obtained from $G$ by deleting the edges of $E'$. Deciding whether $(G,S)$
has a multiway cut of size at most $k$ is NP-complete already for three
terminals~\cite{doi:10.1137/S0097539792225297}. It remains NP-complete on bipartite graphs: subdividing every edge once (i.e.,~replacing each edge $u-v$ by a path $u-x_{uv}-v$ through a new vertex $x_{uv}$) makes the graph bipartite, changes neither the terminals nor the minimum size of a multiway cut, and preserves NP-completeness~\cite{johnson_et_al:LIPIcs.SWAT.2024.29}.

\partitle{Construction}
Let $(G,S,k)$ be such an instance, with $G$ bipartite, and denote $m=|E|$ and
$n=|U|+|W|$. We may assume that $k\le m$, since otherwise $E$ itself is a multiway cut
of size at most $k$. We construct a relation $r$, as illustrated in
Figure~\ref{fig:in-star}, over $\set{A,B,C}$ and a budget $K_r$, writing
a tuple as a triple $(a,b,c)$ that lists its values for $A$, $B$ and $C$.
The
attribute $B$ holds one of three \e{colors} $1$, $2$ and $3$, one for each terminal.
We set $H\defeq 2k+1$. The relation $r$ consists of the following tuples.
\begin{itemize}
  \item \e{Anchors.} If $v\in U\cup W$ and  $b\in\set{1,2,3}$, then we call the tuple $(v,b,v)$ an \e{anchor} of $v$. We add to $r$:
        \begin{itemize}
        \item $H$ anchors for every vertex $v$ and color $i\in\set{1,2,3}$
        (hence, $3H$ tuples in total for $v$). 
        \item  Additional $H$ anchors
        $(s_i,i,s_i)$ for every terminal $s_i\in\set{s_1,s_2,s_3}$.
        \end{itemize}
  \item \e{Edge gadgets.} For every edge $\set{u,w}\in E$ with $u\in U$ and $w\in W$,
        the relation contains the three tuples $(u,1,w)$, $(u,2,w)$ and $(u,3,w)$.
\end{itemize}
The FD set $\depset$ requires that, whenever two tuples agree on $A$ or on $C$, they should also agree on the color in $B$. Hence, the multiplicity of colors is what causes the violations in $r$.

Finally, we set $K_r\defeq 2nH+2m+2k$. Both $r$ and $K_r$ are clearly computable in
polynomial time. We need to show that $(G,S)$ has a multiway cut of size at most $k$ if and
only if $r$ has a consistent update $r'$ with $\distu(r',r)\le K_r$.

\begin{figure}[t]
\centering
\begin{tikzpicture}[
    cell/.style={rectangle split, rectangle split parts=3,
                 rectangle split horizontal, draw=black!75, semithick,
                 rounded corners=1pt,
                 minimum height=5mm, text width=6.5mm, align=center,
                 font=\small, inner sep=1.5pt, fill=white},
    colname/.style={font=\footnotesize\itshape, text=black!70},
    blockname/.style={font=\footnotesize\itshape, text=black!70},
    ann/.style={font=\footnotesize, anchor=west, text=black!70},
    vrtx/.style={circle, draw=black!75, semithick, minimum size=5.5mm,
                 inner sep=0pt, font=\small, fill=white},
    trmnl/.style={vrtx, draw=black, very thick, fill=black!12}
  ]
  \def\fl#1{\textcolor{blue!55!black}{#1}}
  \def\el#1{\textcolor{red!70}{#1}}

  \node[cell] (av1) at (0,0)
    {$\fl{v}$ \nodepart{two} $1$ \nodepart{three} $\fl{v}$};
  \node[cell, below=1.2mm of av1] (av2)
    {$\fl{v}$ \nodepart{two} $2$ \nodepart{three} $\fl{v}$};
  \node[cell, below=1.2mm of av2] (av3)
    {$\fl{v}$ \nodepart{two} $3$ \nodepart{three} $\fl{v}$};
  \node[ann, right=1.2mm of av1] {$\times H$};
  \node[ann, right=1.2mm of av2] {$\times H$};
  \node[ann, right=1.2mm of av3] {$\times H$};

  \node[cell] (au1) at (3.85,0)
    {$\el{u}$ \nodepart{two} $1$ \nodepart{three} $\el{u}$};
  \node[cell, below=1.2mm of au1] (au2)
    {$\el{u}$ \nodepart{two} $2$ \nodepart{three} $\el{u}$};
  \node[cell, below=1.2mm of au2] (au3)
    {$\el{u}$ \nodepart{two} $3$ \nodepart{three} $\el{u}$};
  \node[ann, right=1.2mm of au1] {$\times H$};
  \node[ann, right=1.2mm of au2] {$\times H$};
  \node[ann, right=1.2mm of au3] {$\times H$};

  \node[cell] (aw1) at (7.7,0)
    {$\fl{w}$ \nodepart{two} $1$ \nodepart{three} $\fl{w}$};
  \node[cell, below=1.2mm of aw1] (aw2)
    {$\fl{w}$ \nodepart{two} $2$ \nodepart{three} $\fl{w}$};
  \node[cell, below=1.2mm of aw2] (aw3)
    {$\fl{w}$ \nodepart{two} $3$ \nodepart{three} $\fl{w}$};
  \node[ann, right=1.2mm of aw1] {$\times 2H$};
  \node[ann, right=1.2mm of aw2] {$\times H$};
  \node[ann, right=1.2mm of aw3] {$\times H$};

  \node[colname, above=1mm of av1.one north]   {$A$};
  \node[colname, above=1mm of av1.two north]   {$B$};
  \node[colname, above=1mm of av1.three north] {$C$};
  \node[colname, above=1mm of au1.one north]   {$A$};
  \node[colname, above=1mm of au1.two north]   {$B$};
  \node[colname, above=1mm of au1.three north] {$C$};
  \node[colname, above=1mm of aw1.one north]   {$A$};
  \node[colname, above=1mm of aw1.two north]   {$B$};
  \node[colname, above=1mm of aw1.three north] {$C$};

  \node[blockname] at (0,0.9)    {Anchors of $v$};
  \node[blockname] at (3.85,0.9) {Anchors of $u$};
  \node[blockname] at (7.7,0.9)  {Anchors of $w=s_1$};

  \node[cell] (e1) at (3.85,-2.65)
    {$\el{u}$ \nodepart{two} $1$ \nodepart{three} $\fl{v}$};
  \node[cell, below=1.2mm of e1] (e2)
    {$\el{u}$ \nodepart{two} $2$ \nodepart{three} $\fl{v}$};
  \node[cell, below=1.2mm of e2] (e3)
    {$\el{u}$ \nodepart{two} $3$ \nodepart{three} $\fl{v}$};

  \node[cell] (f1) at (7.7,-2.65)
    {$\el{u}$ \nodepart{two} $1$ \nodepart{three} $\fl{w}$};
  \node[cell, below=1.2mm of f1] (f2)
    {$\el{u}$ \nodepart{two} $2$ \nodepart{three} $\fl{w}$};
  \node[cell, below=1.2mm of f2] (f3)
    {$\el{u}$ \nodepart{two} $3$ \nodepart{three} $\fl{w}$};

  \node[blockname] at (3.85,-2.05) {Gadget of $\set{u,v}$};
  \node[blockname] at (7.7,-2.05)  {Gadget of $\set{u,w}$};

  \node[vrtx]  (gu) at (0.1,-2.75)   {$\el{u}$};
  \node[vrtx]  (gv) at (-0.65,-3.85) {$\fl{v}$};
  \node[trmnl] (gw) at (0.85,-3.85)  {$\fl{w}$};
  \draw[semithick] (gv) -- (gu) -- (gw);
  \node[colname] at (-1.4,-2.75) {$\el{U}$};
  \node[colname] at (-1.4,-3.85) {$\fl{W}$};
  \node[blockname] at (0,-2.05) {The path $v-u-w$};
\end{tikzpicture}
\caption{The tuples that the reduction constructs for a path $v-u-w$ of $G$, where
$u\in U$, $v,w\in W$, and $w$ is the terminal $s_1$.}
\label{fig:in-star}
\end{figure}

\partitle{From a multiway cut to a consistent update}
Let $E'$ be a multiway cut of $(G,S)$ with $|E'|\le k$. Every connected component of
the graph $G'=(U\cup W,E\setminus E')$ contains at most one terminal, so we obtain a coloring $\lambda:U\cup W\ra\set{1,2,3}$ by setting $\lambda(v)=i$ if the component of
$v$ contains $s_i$, and $\lambda(v)=1$ if it contains no terminal. In particular,
$\lambda(s_i)=i$, and $\lambda(u)=\lambda(w)$ for every remaining edge $\set{u,w}\in E\setminus E'$.

We construct $r'$ as follows. For every vertex $v$, we change to $\lambda(v)$ the color
of every anchor of $v$ whose color differs from $\lambda(v)$. A vertex $v\notin S$ has
$2H$ such anchors that we update, and so does a terminal $s_i$, since $\lambda(s_i)=i$ and there are $2H$ anchors
of $s_i$ of the other two colors. The anchors therefore cost $2nH$
in total. Next, consider the gadget of an edge $\set{u,w}$.
\begin{itemize}
  \item If $\lambda(u)=\lambda(w)$, then we change to $\lambda(u)$ the color of the two
        tuples of the gadget whose color differs from $\lambda(u)$; the gadget becomes
        three copies of $(u,\lambda(u),w)$, at a cost of $2$.
  \item If $\lambda(u)\neq\lambda(w)$, then
  the edge $\set{u,v}$ is necessarily one of the removed tuples in $E'$. We replace the $C$-value of the tuple of
        color $\lambda(u)$ with a fresh value, we replace the $A$-value of the tuple of
        color $\lambda(w)$ with a fresh value, and we replace both the $A$-value and
        the $C$-value of the third tuple with fresh values, where every fresh value is
        used in a single cell of $r'$. The cost is $1+1+2=4$.
\end{itemize}
Thus, every edge incurs a cost of 2, and each edge in $E'$ may incur 2 more. 
Since $|E'|\le k$, we conclude that the total cost for updates in the edge gadgets is at most $2m+2k$. We conclude that $\distu(r',r)\le 2nH+2m+2k=K_r$, as required.

It remains to show that $r'\models\set{A\ra B,\,C\ra B}$. Let $a$ be a value that
occurs in the $A$-column of $r'$. If $a=v$ for a vertex $v\in U\cup W$, then the tuples with
$A=v$ are the anchors of $v$ together with the tuples of gadgets of edges incident to
$v$ that retained their $A$-value, and all of them have the color $\lambda(v)$. Every other value of the $A$-column is fresh
and occurs in a single tuple. Hence, $r'\models A\ra B$. The argument for
$C\ra B$ is symmetric.

\partitle{From a consistent update to a multiway cut}
Let $r'$ be a consistent update of $r$ with $\distu(r',r)\le K_r$. We say that a tuple of
$r'$ is \e{attached} to a vertex $u\in U$ if its $A$-value is $u$, and that it is
attached to a vertex $w\in W$ if its $C$-value is $w$. Since $r'\models A\ra B$ and
$r'\models C\ra B$, all the tuples that are attached to a vertex $v$ agree on their
$B$-value; we denote this value by $\lambda(v)$. If no tuple is attached to a node $v$, then we
let $\lambda(v)$ be a value that is not a color, say $\diamond$.

We first bound the cost of the anchors. If an anchor of $v$ is unchanged in $r'$, then
it is attached to $v$, and its color is $\lambda(v)$. Consequently, the anchors
of $v$ contribute to $\distu(r',r)$ at least the number of anchors of $v$ whose color
differs from $\lambda(v)$. This number is:
\begin{itemize}
\item $2H$ if $v\notin S$ and $\lambda(v)$ is a
color, or $v\in S$ is the terminal $s_i$ and $\lambda(v)=i$.
\item $3H$ or more in every other case, that is, $\lambda(v)=\diamond$ or $v=s_i$ and $\lambda(v)\neq i$.
\end{itemize}
Summing over the $n$ vertices, the anchors cost at least
$2nH$, and at least $2nH+H$ if some vertex $v$ has a $\lambda(v)$ that is not a color
or some terminal $s_i$ has $\lambda(s_i)\neq i$.

We now bound the cost of the edge gadgets. The three tuples of the gadget of $\set{u,w}$ agree on $A$ and on $C$ but have distinct colors, so at most one of them is
unchanged in $r'$, and the gadget costs at least $2$. Therefore, the edge gadgets contribute at least $2m$ to the cost.

Hence, the anchors and the gadgets cost at least $2nH+2m$ altogether, which leaves a slack of $K_r-(2nH+2m)=2k$.
Since $H=2k+1$, the additional cost of $H$ identified above exceeds this slack, and it
follows that $\lambda$ maps every vertex to a color and that $\lambda(s_i)=i$ for
$i=1,2,3$. In particular, every vertex has at least one attached tuple in $r'$, since
otherwise its $\lambda$-value would not be a color.

Let $E'$ be the set of the edges $\set{u,w}\in E$ with $\lambda(u)\neq\lambda(w)$. We
claim that the gadget of such an edge costs at least $4$. No tuple of $r'$ has both
$A=u$ and $C=w$, since its color would then be both $\lambda(u)$ and $\lambda(w)$;
therefore, each of the three tuples of the gadget changes its $A$-value or its
$C$-value. Moreover, the tuple $t$ whose color is neither $\lambda(u)$ nor $\lambda(w)$
changes at least two cells: if $t$ retains $A=u$, then it changes its color to
$\lambda(u)$ and, in addition, its $C$-value; if $t$ retains $C=w$, then it changes its
color to $\lambda(w)$ and, in addition, its $A$-value; and if $t$ retains neither, then
it changes both its $A$-value and its $C$-value. 

We conclude that $\distu(r',r)\ge 2nH+2m+2|E'|$, and $|E'|\le k$ follows from
$\distu(r',r)\le K_r$. Finally, $E'$ is a multiway cut of $(G,S)$: the endpoints of every
edge of $E\setminus E'$ have the same color, hence every connected component of
$(U\cup W,E\setminus E')$ is monochromatic, whereas the terminals have pairwise
distinct colors (i.e., $\lambda(s_i)=i$).
\end{proof}

\subsection{Remaining Proofs}

We now turn to the three remaining combinations, describing for each of them the
construction and the idea behind its analysis.

\begin{proposition}\label{prop:match-to-att-hard}
Computing an optimal U-repair for $A\lra B\ra C$ is NP-hard.
\end{proposition}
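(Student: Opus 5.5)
I plan to reduce from the multiway cut problem with three terminals on bipartite graphs, reusing the architecture of the proof of Proposition~\ref{prop:in-star-hard}. The attribute $C$ plays the role of the color, and the matching pair $A\lra B$ plays the role that the pair $(A,C)$ played there. In a consistent relation for $A\lra B\ra C$, the values of $A$ and $B$ are in bijection, and each matched pair $(a,b)$ carries a single color in $C$. So a consistent update attaches a tuple either to an $A$-value $a$ or to a $B$-value $b$. If it keeps both, then $a$ and $b$ must be matched to each other.

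\textbf{Construction.} Given a bipartite instance $(G=(U,W,E),S,k)$, I introduce, for every vertex $v$, the $A$-value $v$ and the $B$-value $v'$. I add heavy anchors $(v,v',i)$ for each color $i$, each with multiplicity $H$, plus $H$ extra anchors $(s_i,s_i',i)$ for each terminal. These force the pair $v\lra v'$ to be preserved and assign $v$ a color $\lambda(v)$, at a baseline cost of $2H$ per vertex. For an edge $\{u,w\}$, I add the gadget tuples $(u,w',1)$, $(u,w',2)$, $(u,w',3)$.

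The matching constraint is now stricter than the two independent FDs of Proposition~\ref{prop:in-star-hard}. A tuple $(u,w')$ cannot keep both its $A$-value and its $B$-value unless $u$ is matched to $w'$, which the anchors forbid. So every gadget tuple must change its $A$-value or its $B$-value, regardless of whether the edge is cut. Keeping $A=u$ then requires the tuple's $B$-value to become $u'$ and its color to become $\lambda(u)$; symmetrically, keeping $B=w'$ requires $A=w$ and color $\lambda(w)$.

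For an uncut edge ($\lambda(u)=\lambda(w)$), the cost of the gadget is as follows. The tuple whose color is $\lambda(u)$ pays $1$, by changing $B$ to $u'$. Each of the other two tuples pays $2$, by changing $B$ and $C$. The total is $5$. For a cut edge, the tuple of color $\lambda(u)$ pays $1$ and the tuple of color $\lambda(w)$ pays $1$, but the third tuple pays $2$. That is again $4$ or $5$, so cut edges would not be penalized. I must therefore adjust the gadget so that cut edges cost strictly more. My first attempt is to use only two tuples per edge, $(u,w',1)$ and $(u,w',2)$, or more generally to duplicate colors so that the gadget's cost separates the cases. A more reliable option is to make the gadget tuples $(u,w',x_{uw})$ with a fresh color placeholder, and to replicate them $2$ times. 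Equivalently, I will use an intermediate connector: an edge gadget consisting of a tuple $(u,e',c)$ and a tuple $(e,w',c)$ sharing a fresh edge pair $e\lra e'$ with light anchors. Then an edge costs a fixed base amount if both endpoints agree on color, and one extra unit if they disagree, because the shared pair $e$ has a single color.

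\textbf{Correctness.} Following the template of Proposition~\ref{prop:in-star-hard}, I will set $K_r$ equal to the anchor baseline plus the uncut edge cost times $m$ plus $k$, with $H$ larger than the total slack. The forward direction colors the vertices by their terminal components and colors each edge pair by one endpoint. The backward direction uses $H$ to force every vertex pair to be preserved and $\lambda(s_i)=i$, and then charges one extra unit per bichromatic edge.

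\textbf{Main obstacle.} The hard step is calibrating the edge gadget under $A\lra B$ so that it has a uniform minimum cost with a strict additive penalty exactly when the endpoint colors differ. I must also check that no cheaper trick exists, such as redirecting a gadget tuple to a fresh pair or to another vertex's pair. I would verify this by a local case analysis over the few possible "attachments" of each gadget tuple.
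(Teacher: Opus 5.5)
There is a genuine gap at the step you yourself call the main obstacle. No edge gadget is ever fixed, and none of your candidates produces the ``one extra unit per bichromatic edge'' that the correctness paragraph relies on. The obstruction is structural. Once the anchors force each pair $(v,v')$ to survive with a color $\lambda(v)$, consider a tuple $(a,b,i)$ whose $A$-value lies in one surviving pair and whose $B$-value lies in another. It must give up $A=a$ or $B=b$, so its cheapest option costs exactly $1+[\,i\neq\lambda(a)\wedge i\neq\lambda(b)\,]$ (Iverson brackets). Now take a bundle of such tuples between the same two pairs, with $w_i$ copies of color $i$ and $W=\sum_i w_i$. Writing $x=\lambda(a)$ and $y=\lambda(b)$, the bundle costs $2W-w_x-w_y+w_x[x=y]$. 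That is unary terms plus a penalty for \emph{agreement}, never for disagreement. This formula accounts for all your variants:
\begin{itemize}
\item it is exactly your $5$ versus $4$;
\item with colors $1,2$ only, the bichromatic pair $\set{1,2}$ becomes the cheapest configuration;
\item with a fresh color $x_{uw}$, every copy pays exactly $2$ whatever the colors, since it may keep $x_{uw}$ on a fresh pair.
\end{itemize}
The connector $(u,e',c),(e,w',c)$ is itself a pair of straddling tuples, and it does not couple $\lambda(u)$ and $\lambda(w)$ as you claim. If giving $e$ the color $c$ costs $\delta$ extra anchor updates, the connector costs $\min(2+[\lambda(u)\neq c]+[\lambda(w)\neq c],\,2+\delta)$. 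This is constant for $\delta=0$, purely unary for $\delta\ge2$, and equal to $3-[\lambda(u)=\lambda(w)=c]$ for $\delta=1$. At best it rewards agreement on the single color $c$. An edge whose endpoints agree on another color therefore pays as much as a cut edge.

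Your connector can be repaired by using one copy per color. For every edge $\set{u,w}$ and every $c\in\set{1,2,3}$, add fresh values $e_c,e'_c$, a single anchor $(e_c,e'_c,\bot)$ with $\bot$ fresh (so $\delta=1$), and the tuples $(u,e'_c,c)$ and $(e_c,w',c)$. An edge then costs $9-[\lambda(u)=\lambda(w)]$. The lower bound is a short case analysis: a straddling tuple that abandons its vertex either pays $2$, or keeps its $e_c$-value with $C=c$, which forces the anchor $(e_c,e'_c,\bot)$ to pay. The budget becomes $2Hn+8m+k$ with $H>9m$.

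Alternatively, since straddling tuples penalize agreement, your very first gadget is already a reduction from \textsc{3-Colorability}: drop the terminal anchors and take $H=4m+1$ and budget $2Hn+4m$. The paper exploits the same disjunctive behavior, but reduces from \textsc{Vertex Cover}. There, $M+1$ copies of $(v,v,0)$ and $M$ copies of $(v,v,1)$ make $C=1$ cost one extra unit per vertex. The $L$ copies of $(u,v,1)$ per edge cost $L$ if some endpoint has $C=1$, and at least $2L$ otherwise.
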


\begin{proofsketch}
We reduce from \mathsc{Vertex Cover}. Let $G=(V,E)$ be a graph and let $k\le|V|$ be a
bound on the size of a vertex cover. We set $L\defeq|V|+1$ and $M\defeq L|E|+|V|+1$,
and we construct a relation $r$ over $\set{A,B,C}$ that consists of:
\begin{itemize}
\item The \e{anchors} of
every vertex $v\in V$, consisting of $M+1$ copies of the tuple $(v,v,0)$ and $M$ copies of $(v,v,1)$;
\item An \e{edge gadget} of $L$ copies of $(u,v,1)$ for every edge
$\set{u,v}\in E$. (We select an arbitrary ordering $(u,v)$ between the nodes.)
\end{itemize}
The budget is $K_r\defeq M|V|+L|E|+k$.

Given a vertex cover $S$ of size at most $k$, we construct a repair by changing every anchor $(v,v,0)$ to $(v,v,1)$ if $v\in S$ (cost $M|S|+|S|$), changing every anchor $(v,v,1)$ to $(v,v,0)$ if $v\notin S$
(cost $M(|V|-|S|)$), and changing every
edge gadget $(u,v,1)$ to $(u,u,1)$ or $(v,v,1)$, depending on whether $u$ or $v$ is in $S$ (cost $L|E|$). All in all, the cost is $M|V|+L|E|+|S|\leq K_r$.

Now suppose that we are given a repair of cost at most $K_r$.
Since $A\lra B$, the pairs of $A$-values and $B$-values that occur in a consistent relation
form a partial bijection. The anchors of $v$ all carry the pair $(v,v)$ and split into
a group of $M+1$ tuples with $C=0$ and a group of $M$ tuples with $C=1$, violating $B\ra C$.  Hence, we must modify at least one cell in every tuple of one of them, at a cost of at least $M$ per vertex. The budget is $M|V|+L|E|+k$, so this leaves a slack of $L|E|+k$. Pulling the two groups apart is expensive: a group can be separated only by
changing \e{both} its $A$-value and its $B$-value, since changing one of the two alone
collides with the group that stays, and this costs at least $2M$. As $M$ exceeds the
entire slack, no repair can afford it, and every vertex must instead unify its
anchors on a single value $c_v\in\set{0,1}$ of $C$, at a cost of $M$ if $c_v=0$ and of
$M+1$ if $c_v=1$. The anchors therefore cost precisely $|V|M+|S|$, where
$S\defeq\setof{v\in V}{c_v=1}$ is the intended vertex cover.

The edge gadgets test whether $S$ covers $E$. A tuple $(u,v,1)$ cannot retain both
$A=u$ and $B=v$, since the anchors bind $u$ to the pair $(u,u)$ and $v$ to the pair
$(v,v)$; it must join the cluster of $u$ or the cluster of $v$. Joining that of $u$
amounts to changing its $B$-value, and it changes nothing else exactly when $c_u=1$,
and symmetrically for $v$. Hence, an edge costs $L$ if at least one of its endpoints
belongs to $S$, and at least $2L$ otherwise. Since $L>k$, the budget admits no
uncovered edge, and it leaves $|S|\le k$.
\end{proofsketch}

\begin{proposition}\label{prop:att-to-match-hard}
Computing an optimal U-repair for $A\ra B\lra C$ is NP-hard.
\end{proposition}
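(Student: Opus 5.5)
The plan is a reduction from \mathsc{Vertex Cover} in the style of Proposition~\ref{prop:match-to-att-hard}. It rests on the following reading of $A\ra B\lra C$. The FD set is equivalent to $\set{B\lra C,\,A\ra B,\,A\ra C}$, so a relation $r'$ is consistent iff two conditions hold. First, the pairs $(B,C)$ occurring in $r'$ form a partial bijection. Second, every $A$-value is associated with a single such pair. Hence $A$ acts as a \emph{binding} attribute: all tuples sharing an $A$-value are glued to one matched pair, and the matching $B\lra C$ encodes a global choice. I will encode the state of a vertex $v$ as the $C$-partner of the $B$-value $v$, which is either $v_0$ or $v_1$. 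The $C$-values must be vertex-specific because of $C\ra B$.

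The vertex anchors come first. With a large multiplicity $M$, I add $M+1$ copies of $(a_v,v,v_0)$ and $M$ copies of $(a_v,v,v_1)$, which share the $A$-value $a_v$.
\begin{itemize}
\item Because of $A\ra C$, the two groups must be unified.
\item Changing the $C$-value of one group costs $M$ for state $0$ and $M+1$ for state $1$.
\item Separating the groups instead requires changing $A$ and also $B$ or $C$ of a whole group, which costs at least $2M$. Changing $A$ alone leaves $v$ matched to both $v_0$ and $v_1$.
\end{itemize}
So, exactly as in Proposition~\ref{prop:match-to-att-hard}, choosing $M$ larger than all slack forces the anchors to cost $M|V|+|S|$, where $S=\setof{v}{\text{$v$ is matched to $v_1$}}$.

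Next comes the edge gadget, which must cost a base amount $L$ when $u\in S$ or $v\in S$ and at least $2L$ otherwise. A bare tuple $(f,u,v_1)$ with a fresh $A$-value does not work. Repairing it by changing $C$ to the partner of $u$ costs $1$ in both states of $u$, so it carries no signal. The gadget must therefore exploit the $A$-binding. My first attempt is to give each edge $e=\set{u,v}$ a shared $A$-value $e$ with $L$ copies of $(e,u,u_1)$ and $L$ copies of $(e,v,v_1)$. I may add a third, heavier group $(e,h_e,h'_e)$ with fresh values, which makes the cheap move for each endpoint group to detach by changing $A$.
\begin{itemize}
\item An endpoint group that keeps its pair $(w,w_1)$ costs only its $A$-change, and this is consistent only if $w\in S$.
\item An endpoint group with $w\notin S$ must also change $C$ to $w_0$.
\end{itemize}
The weights then need tuning. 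The intended accounting is to charge every edge a fixed base for the $A$-detachments and an extra amount only for endpoints outside $S$. If that gives a per-endpoint rather than per-edge penalty, I will instead switch to reducing from a variant where this is natural. Two candidates are the Kolahi--Lakshmanan path $A\ra B\ra C$, where the $C$-column plays the role of the path's last attribute, and a multiway-cut style construction as in Proposition~\ref{prop:in-star-hard}.

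The forward direction is routine once the gadget is fixed. Color the vertices by $S$, unify the anchors, and repair each edge gadget in its cheap mode. The budget is $M|V|+(\text{base})\cdot|E|+k$. For the converse, the slack argument forces $M$ to dominate, which pins down the anchors and the cover $S$. A per-gadget lower bound of the same kind as in Proposition~\ref{prop:in-star-hard} then shows that an uncovered edge costs at least $L>k$ extra. This yields $|S|\le k$ and that $S$ is a vertex cover.

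The main obstacle is the design of the edge gadget. Because of $B\lra C$, any single tuple can be repaired at unit cost regardless of the vertex states. Penalizing uncovered edges therefore has to come from the $A$-binding forcing \emph{whole groups} to adopt a state-dependent pair. Getting a penalty per edge rather than per endpoint requires careful multiplicities, and I expect most of the verification to be spent there.
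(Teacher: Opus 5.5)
\textbf{Gap: the edge gadget is missing, and your candidate cannot be tuned into one.} The whole difficulty of this reduction is the edge gadget. Your proposal leaves it unconstructed, and the candidate you give cannot be made to work by choosing weights.

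Once the anchors fix the partner $\sigma(w)\in\set{w_0,w_1}$ of every $B$-value $w$, the gadget with $L$ copies of $(e,u,u_1)$ and $L$ copies of $(e,v,v_1)$ costs exactly $L(1+|e\setminus S|)$. The cheapest repair keeps one group on $A=e$ and detaches the other by changing $A$. It pays one more cell per tuple in every group whose vertex lies outside $S$. Unifying both groups on a common pair is never cheaper. With the heavy group $(e,h_e,h'_e)$ the cost becomes $L(2+|e\setminus S|)$, and unequal multiplicities behave the same way.

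The reason is structural. Each group carries a pair built from a single vertex, so the saving from letting a group keep $A=e$ does not depend on the vertex states. The penalty is therefore always additive over the endpoints, and the total cost $M|V|+|S|+L|E|+L\sum_{v\notin S}\deg(v)$ is separable over the vertices. It is minimized by putting every non-isolated vertex into $S$, so it encodes nothing about vertex covers.

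Your fallbacks (reducing from $A\ra B\ra C$, or a multiway-cut construction) are only named, not carried out. For instance, it is unclear how a cost-preserving map from $A\ra B\ra C$ would handle the additional constraint $C\ra B$.

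\textbf{The missing idea: put two vertices inside one $A$-group.} Then no tuple's $(B,C)$ pair is admissible, and the choice of the common pair acts as a disjunction. The paper does exactly this in a reduction from 3SAT:
\begin{itemize}
\item Each variable $v$ gets $L$ copies of $(v,v,T_v)$ and $L$ copies of $(v,v,F_v)$.
\item Each clause gets the three \emph{shifted} tuples $(c,v_1,w_3)$, $(c,v_2,w_1)$, $(c,v_3,w_2)$, which share $A=c$.
\item The budget is $L|V|+4m$.
\end{itemize}
A clause gadget costs at least $4$. It costs exactly $4$ only if some $w_i$ is retained, which forces $w_i=\sigma(v_i)$, i.e., a satisfied literal.

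\textbf{How to rescue your vertex-cover plan.} You were close: the bare tuple $(f,u,v_1)$ only needs to be bound to its mirror through a shared $A$-value. Take $L$ copies of $(e,u,v_1)$ and $L$ copies of $(e,v,u_1)$.
\begin{itemize}
\item Each of these tuples must change its $B$- or $C$-value, and a detached tuple costs at least $2$.
\item Both groups pay only one cell per tuple only on the pair $(u,u_1)$, which is admissible iff $u\in S$, or on $(v,v_1)$, which is admissible iff $v\in S$.
\end{itemize}
So the gadget costs $2L$ if the edge is covered and at least $3L$ otherwise. Keep your anchors and set $L>k$, $M>2L|E|+k$, and budget $M|V|+2L|E|+k$. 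Your slack argument then goes through, yielding a vertex cover $S$ with $|S|\le k$.
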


\begin{proofsketch}
We reduce from \mathsc{3SAT}, assuming that every clause consists of three literals
over three distinct variables. Let $\phi$ be a formula with variable set $V$ and $m$
clauses, and set $L\defeq 4m+1$. With every variable $v$ we associate two fresh values
$T_v$ and $F_v$, one for each truth value. The constructed relation over $\set{A,B,C}$
consists of a \e{variable gadget} for every variable $v$, namely $L$ copies of
$(v,v,T_v)$ and $L$ copies of $(v,v,F_v)$, and a \e{clause gadget} for every clause
$c=(\ell_1\vee\ell_2\vee\ell_3)$ over the variables $v_1,v_2,v_3$, namely the three
tuples: 
\[
  (c,v_1,w_3)\qquad (c,v_2,w_1) \qquad (c,v_3,w_2)
\]
Here, $w_i$ is the value of $v_i$ that satisfies $\ell_i$, that is, $T_{v_i}$ if
$\ell_i$ is positive and $F_{v_i}$ if $\ell_i$ is negative. Note that the indices in a clause gadget
are shifted: $v_1$, $v_2$ and $v_3$ coincide with $w_3$, $w_1$ and $w_2$, respectively. For example, if $c$ is $x\lor y\lor \neg z$, then the three tuples are:
\[
  (c,x,F_z)\qquad (c,y,T_x) \qquad (c,z,T_y)
\]
The budget is $K\defeq L|V|+4m$. 
The proof shows how a satisfying assignment can yield a repair that costs $L$ on each variable gadget (changing the truth values according to the assignment) and $4$ for each clause gadget. For the other direction, we show that with a budget of $K$, the variable gadgets must define an assignment $\tau$ as in the first direction, and each clause must spend precisely $4$ updates, implying that it retains at least one of its satisfying $w_i$'s, which is delivered by the assignment $\tau$.
\end{proofsketch}

\begin{proposition}\label{prop:match-to-match-hard}
Computing an optimal U-repair for $A\lra B\ra C\lra D$ is NP-hard.
\end{proposition}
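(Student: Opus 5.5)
I plan to adapt the Vertex Cover reduction of Proposition~\ref{prop:match-to-att-hard} for $A\lra B\ra C$ by letting the pair $(C,D)$ play the role that $C$ alone played there. I fix $G=(V,E)$ and $k$, set $L\defeq|V|+1$ and $M\defeq L|E|+|V|+1$, and build a relation over $\set{A,B,C,D}$. For every $v\in V$ the \emph{anchors} are $M+1$ copies of $(v,v,0,0)$ and $M$ copies of $(v,v,1,1)$. For every edge $\set{u,v}$, with an arbitrary orientation $(u,v)$, the \emph{edge gadget} is $L$ copies of $(u,v,1,1)$. The intended budget is $K_r\defeq 2M|V|+2L|E|+2k$. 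Each disagreement on $C$ is now mirrored by a disagreement on $D$, and fixing one costs two cells (both $C$ and $D$). So I expect every cost in the three-attribute proof to simply double. I will also add a device to ensure that the values $0,1$ do not interact across vertices through $C\lra D$.

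\textbf{That interaction is the main obstacle.} In the three-attribute case, $C=0$ could be shared by all vertices harmlessly. Here $C\ra D$ and $D\ra C$ are global, so a consistent update must induce a partial bijection between the $C$-values and the $D$-values. Using $(0,0)$ and $(1,1)$ for every vertex is still consistent, so sharing is fine as long as every cluster uses exactly the pairs $(0,0)$ or $(1,1)$.

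The danger lies in the other direction. A cheap repair might change only $D$, turning a $(1,1)$-group into $(1,0)$ at cost $M$. That collides with pairs $(0,0)$ elsewhere, and hence also with $(1,1)$, unless all vertices are moved consistently. I must show that this never beats the intended cost. The key claim I will prove is that on the anchors of $v$, all tuples keeping $(A,B)=(v,v)$ must share one $(C,D)$ pair. Making a group of size at least $M$ agree on both $C$ and $D$ with the other group costs at least $2M$, unless the group is separated by changing both $A$ and $B$, which also costs $2M$. A repair could instead change the $(C,D)$ pair to some new $(x,y)$, but each mismatched cell still costs one per tuple. So the anchors cost at least $2M$ per vertex, and exactly $2M$ only when the vertex unifies on $(0,0)$. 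Unifying on $(1,1)$ costs $2M+2$. If this accounting becomes delicate, I will try giving each vertex its own values $(0_v,0'_v)$ and $(1_v,1'_v)$, with the edge tuples carrying $(1_u,1'_u)$; but then an edge would have to carry both endpoints' values. So I will keep the shared values $0,1$ and rely on the per-vertex counting argument.

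\textbf{Forward direction.} From a cover $S$ with $|S|\le k$, I set every anchor of $v\in S$ to $(v,v,1,1)$, at cost $2(M+1)$. I set every anchor of $v\notin S$ to $(v,v,0,0)$, at cost $2M$. For each edge I change $B$ to $u$, or $A$ to $v$, toward an endpoint in $S$, at cost $L$ per edge. Consistency holds because $(A,B)$ pairs remain the bijection $v\mapsto v$, each $B$-value determines a single $(C,D)\in\set{(0,0),(1,1)}$, and $C\lra D$ holds. The total is at most $2M|V|+2|S|+L|E|\le K_r$. I will then tighten the budget to $K_r\defeq 2M|V|+L|E|+2k$.

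\textbf{Backward direction.} As in Proposition~\ref{prop:match-to-att-hard}:
\begin{enumerate}
\item The anchor bound above leaves a slack of $L|E|+2k$, which is less than $M$. This forces every vertex to unify its anchors on $(0,0)$ or $(1,1)$, with no separation, and fixes the anchor cost at $2M|V|+2|S|$, where $S\defeq\setof{v}{c_v=1}$.
\item Each edge tuple cannot keep both $A=u$ and $B=v$. So it changes at least one of $A,B$, which costs $L$ per gadget. If it keeps $(1,1)$ while joining an endpoint $w\notin S$, it must also change $C$ and $D$. Rerouting to fresh $A,B$ values also costs at least $2L$.
\item Hence an uncovered edge costs at least $2L>L+2k$, since $L>2k$ once I take $L\defeq 2|V|+1$. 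It follows that $S$ is a vertex cover and $2|S|\le 2k$.
\end{enumerate}
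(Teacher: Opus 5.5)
There is a genuine gap in the backward direction, and it breaks the construction itself, not only its analysis. Your anchor accounting considers only repairs in which some anchor of $v$ keeps the pair $(v,v)$, with the other anchors either aligning their $(C,D)$-pair or leaving by changing both $A$ and $B$. It overlooks repairs in which \emph{no} anchor keeps $(v,v)$. For example, change $B$ to a fresh $b'$ in the $M+1$ copies of $(v,v,0,0)$, and change $A$ to a fresh $a'$ in the $M$ copies of $(v,v,1,1)$. This yields the clusters $(v,b',0,0)$ and $(a',v,1,1)$, is consistent, and costs $2M+1$, which is only one unit above your baseline of $2M$.

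In the three-attribute proof the same split also costs $2M+1$, but there the baseline is $M$, so the slack argument works. Mirroring $C$ into $D$ doubles the cost of unifying the anchors but not the cost of splitting them. Relatedly, letting the $(1,1)$ group leave by changing $A$ and $B$ costs exactly $2M$, so a cost of $2M$ does not force unification on $(0,0)$ either.

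The split makes $v$ ``half in''. Every edge tuple in which $v$ is the $B$-coordinate can keep $B=v$ and $(C,D)=(1,1)$ at cost $1$. Those edges are therefore covered for an extra $1$, rather than the extra $2$ you charge for $v\in S$. The symmetric split covers the edges in which $v$ is the $A$-coordinate.

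This defeats the reduction. Take two disjoint edges $(u_1,v_1)$ and $(u_2,v_2)$, and $k=1$. Unify each $u_i$ on $(0,0)$ at cost $2M$, split each $v_i$ as above at cost $2M+1$, and change the $A$-value of each edge tuple to $a'_i$ at cost $L$ per edge. The result is consistent and costs $8M+2L+2=K_r$, yet the graph has no vertex cover of size $1$.

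Tuning $L$, $M$, or the sizes of the two anchor groups does not help. With groups of sizes $p>q$, joining $S$ costs $2(p-q)$ above the baseline while a split costs $p-q$. Hence two half-in vertices always cost exactly as much as one vertex of $S$.

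The paper avoids this by not mirroring $C$ at all. It reduces from $A\lra B\ra C$ (Proposition~\ref{prop:match-to-att-hard}) by giving each tuple a private $D$-value, guarded by an enforcing group with more copies than the budget. It then normalizes away main tuples that collide with an enforcing tuple on $C$ (equivalently, on $D$). After that, every main tuple pays exactly one $D$-cell, so the cost shifts uniformly by $n$ and the three-attribute accounting, where splitting an anchor set is prohibitively expensive, is preserved.
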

\begin{proofsketch}
We reduce from $A\lra B\ra C$, which is the FD set of \Cref{prop:match-to-att-hard} and
consists of the first three attributes of $A\lra B\ra C\lra D$. Let $r$ be a relation
over $\set{A,B,C}$ with $n$ tuples, let $K$ be a budget, and set $N\defeq K+n+1$. We
construct a relation $r'$ over $\set{A,B,C,D}$ that consists of:
\begin{itemize}
\item A \e{main tuple} for every tuple of $r$, which copies its $A$-, $B$-, and $C$-values and uses a
fresh value in the new attribute $D$, called the \e{private} value of the tuple;
\item An \e{enforcing group} of $N$ identical tuples for every fresh
value $d$, using the value $d$ in $D$ and, in each of $A$, $B$ and $C$, a value that
occurs nowhere else.
\end{itemize}
The budget is $K+n$.

Given a consistent update $s$ of $r$ of cost at most $K$, we construct a consistent
update of $r'$ of cost at most $K+n$: we leave the enforcing groups unchanged, we
modify the $A$-, $B$- and $C$-cells of the main tuples as $s$ does, at a cost of at most
$K$, and we give the main tuples that share a $C$-value in $s$ a common fresh
$D$-value, which satisfies $C\lra D$ at the price of one cell per tuple of $r$. Since
these $D$-values are fresh, no main tuple shares a value with an enforcing group in any
attribute.

Now suppose that we are given a consistent update of $r'$ of cost at most $K+n$. As $N$
exceeds the budget, every enforcing group retains an unchanged copy. We may assume that
no main tuple agrees with an enforcing tuple on $C$ (equivalently, on $D$). Indeed,
consider a group of main tuples that have been updated to the same $(a,b,c,d)$, where
$(c,d)$ is the pair of an enforcing tuple. All of them have changed their $C$-value,
since $c$ occurs nowhere in $r$, and at most one of them retains its original
$D$-value. We take the original $C$-value $c'$ of one of them, and we pair it with the
$D$-value already used for $c'$ in the update, if $c'$ already occurs, or with a fresh
$D$-value otherwise. We replace all of their pairs $(c,d)$ by this pair $(c',d')$. If
they also carry the $A$- and $B$-values of the enforcing tuple, then they have all been updated, and we replace them by a
fresh pair; this does not increase the cost since these cells were already changed.
One $C$-cell is then restored and at most one extra $D$-cell is changed, so the cost
does not increase. Repeating for every such group, we get a repair of $r'$ where no main tuple agrees with an enforcing tuple on $D$.

It follows that no main tuple retains its private value, since it would then agree with
its own enforcing tuple on $D$. The update therefore modifies all the $D$-cells of the
main tuples and at most $K$ cells on $A$, $B$ and $C$, and their restriction to $A$,
$B$ and $C$ is an update of $r$ of cost at most $K$, which is consistent since
$A\lra B\ra C$ is implied by $A\lra B\ra C\lra D$.
\end{proofsketch}

\section{Generalization via Hard Convex Sets}\label{sec:scc-reduction}

We now generalize the hardness results of the specific cases from the previous sections to the full hardness side of \Cref{thm:main_dichotomy}, thus completing the proof of the theorem. For the rest of this section, we fix a set $\depset$ of unary FDs. 

For a set $S$ of attributes, we denote by $\depset_S$ the FD set that $S$ \e{induces}, that is, the set of all FDs in $\depset^+$ that use only attributes in $S$. 
The generalization goes as follows. 
We first define the notion of a \e{convex} set $S$ of attributes with respect to $\depset$.  Second, we show that for each convex set $S$ of $\depset$, there is a reduction from computing an optimal U-repair for $\depset_{S}$ to computing an optimal U-repair for $\depset$. Third, we show that if $\depset$ is in the hardness side of \Cref{thm:main_dichotomy}, then at least one of its convex sets $S$ induces one of the hard special cases $\depset_{S}$ of the previous sections. We begin with the definition of convexity.

\begin{definition}[Convexity]
A subset $S$ of $\attr(\depset)$ is said to be \e{convex} (\e{w.r.t.~$\depset$}) if for every three attributes $A$, $B$ and $C$ with $A\ra B\ra C\in\depset^+$, if $A\in S$ and $C\in S$ then $B\in S$. 
\end{definition}
As an example, consider
$\depset=\set{A\ra B, B\ra C, A\ra D, D\ra C}$. Then $S=\set{A,B,C}$ is \e{not} convex, since we have $A\ra D\ra C\in\depset^+$ and both $A$ and $C$ are in $S$, but $D$ is not. On the other hand, the sets $\set{A}$, $\set{A,B}$ and $\set{A,B,D}$ are all convex. Also, convexity is closely related to strong connectivity: 
\begin{observation}\label{obs:scc-convex}
The following hold. 
\begin{enumerate}
\item Every SCC is convex.
\item Every convex set is the union of SCCs (but a union of SCCs is not necessarily convex).
\end{enumerate}
\end{observation}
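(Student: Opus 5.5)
Both parts of \Cref{obs:scc-convex} follow directly from the definition of convexity together with the fact that $\graph(\depset)$ is built over $\depset^+$, so it is transitively closed. The plan is to prove each part by unfolding the definitions, and then to give a small counterexample for the parenthetical remark.

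\emph{Part 1.} Let $S$ be an SCC and take attributes $A,B,C$ with $A\ra B\ra C\in\depset^+$ (that is, $A\ra B\in\depset^+$ and $B\ra C\in\depset^+$) and $A,C\in S$.
\begin{itemize}
\item Since $A$ and $C$ are equivalent, $C\ra A\in\depset^+$.
\item Composing $B\ra C$ with $C\ra A$ by transitivity gives $B\ra A\in\depset^+$.
\item Together with $A\ra B\in\depset^+$, this makes $B$ equivalent to $A$, so $B\in S$.
\end{itemize}
The degenerate cases where $B$ coincides with $A$ or with $C$ are immediate.

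\emph{Part 2.} Let $S$ be convex, let $A\in S$, and let $B$ lie in the SCC of $A$. We must show $B\in S$.
\begin{itemize}
\item Since $A\ra B$ and $B\ra A$ are both in $\depset^+$, we have the chain $A\ra B\ra A$.
\item Apply the definition of convexity with both endpoints equal to $A$; this yields $B\in S$.
\end{itemize}
The only point needing care is whether the definition allows the two endpoints to coincide. As stated, it quantifies over all attributes $A,B,C$ without requiring them to be distinct, so $A=C$ is permitted. If one preferred a distinctness reading, the SCC of $A$ has just two attributes $A,B$, and no distinct third attribute is available, so that reading would not yield $B\in S$. Under the literal reading, $S$ is closed under equivalence and is therefore a union of SCCs.

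\emph{Counterexample for the parenthetical.} Reuse the paper's earlier example $\depset=\set{A\ra B, B\ra C, A\ra D, D\ra C}$. Every attribute is its own SCC, so $\set{A,B,C}$ is a union of SCCs, yet it is not convex because of the chain $A\ra D\ra C$.

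I expect no real obstacle. The only subtle step is the endpoint-coincidence reading in Part 2, which I would state explicitly.
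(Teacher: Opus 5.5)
Your proof is correct and is the routine unfolding of the definitions that the paper leaves implicit: the paper states the observation without proof, and its preceding four-attribute example is the natural witness for the parenthetical, as you use it. Your literal reading, which allows $A=C$ in the definition of convexity, is the right one and is also what the paper relies on in the first case of the proof of Lemma~\ref{lem:convex-reduction}; under a distinctness reading Part~2 would fail outright, since a singleton inside any nontrivial SCC (not only a two-attribute one) would be vacuously convex.
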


We prove the following:

\begin{lemma}[Reduction from a convex set]\label{lem:convex-reduction}
Let $\depset$ be a set of unary FDs, and let $S\subseteq\attr(\depset)$ be a convex set of attributes. There is a polynomial-time cost-preserving reduction from 
computing an optimal U-repair for $\depset_S$ to computing an optimal U-repair for $\depset$.
\end{lemma}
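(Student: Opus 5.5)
Given a relation $s$ over $S$ (an instance for $\depset_S$), the natural reduction extends each tuple of $s$ to all of $\attr(\depset)$ by filling the attributes outside $S$ with values chosen so that every consistent update of $s$ extends, at zero extra cost, to a consistent update of the padded relation $r$. Conversely, every consistent update of $r$ should restrict to a consistent update of $s$ of no larger cost. Partition the outside attributes $T=\attr(\depset)\setminus S$ into $T_{\downarrow}$, the attributes $D\notin S$ with $A\ra D\in\depset^+$ for some $A\in S$, and the rest. Convexity is what makes this split well behaved. If $D\in T_\downarrow$, then no attribute of $S$ is determined by $D$: if $D\ra C$ with $C\in S$, then $A\ra D\ra C$ with $A,C\in S$ would force $D\in S$. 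So no outside attribute lies on a path between two $S$-attributes.

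For the padding, I would give every tuple the same constant $\bot$ in every attribute of $T_\downarrow$, and a fresh, tuple-private value in every other attribute of $T$. The forward direction then takes a consistent update $s'$ of $s$ and keeps the padding unchanged. We check the FDs of $\depset^+$ by type, assuming $\depset$ is replaced by its closure, so that all unary consequences are present.
\begin{itemize}
\item An FD $X\ra Y$ with $X,Y\in S$ lies in $\depset_S$ and holds in $s'$.
\item An FD with right-hand side in $T_\downarrow$ is satisfied because that column is constant.
\item An FD whose left-hand side is a private-valued attribute is satisfied because its values never repeat.
\item An FD $D\ra Y$ with $D\in T_\downarrow$ must have $Y\in T_\downarrow$. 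Indeed, $Y\notin S$ by the convexity remark, and $A\ra D\ra Y$ puts $Y$ in $T_\downarrow$. Both columns are constant, so it holds.
\item An FD $A\ra Y$ with $A\in S$ must have $Y\in S\cup T_\downarrow$, by the definition of $T_\downarrow$, and is covered by the cases above.
\end{itemize}
Hence the cost is preserved exactly.

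For the backward direction, take a consistent update $r'$ of $r$ and restrict it to $S$. The restriction satisfies $\depset_S$, since every FD of $\depset_S$ lies in $\depset^+$ and $r'\models\depset^+$. The cost of the restriction is at most $\distu(r',r)$, since we only drop cells. So the optimum for $s$ equals the optimum for $r$. An optimal repair of $r$ restricts to an optimal repair of $s$, and an optimal repair of $s$ pads to an optimal repair of $r$, which gives a cost-preserving reduction in both directions.

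I expect the main obstacle to be the case analysis of the forward direction: verifying that no FD of $\depset^+$ has a private-valued attribute on its right-hand side while having a repeated left-hand side. The risky case is an FD $D\ra E$ where $D$ is private-valued and $E\in S$ or $E\in T_\downarrow$. This is harmless, because private left-hand sides never repeat. So the only real content is that repeated-value attributes, namely those in $S\cup T_\downarrow$, only determine attributes in $S\cup T_\downarrow$, and that $T_\downarrow$-attributes never determine $S$-attributes. Both facts follow from convexity and transitivity. If that case check goes through, the lemma follows.
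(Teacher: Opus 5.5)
Your proposal is correct and is essentially the paper's proof. It uses the same padding (one shared constant on the outside attributes determined by $S$, and fresh tuple-private values on all other outside attributes), the same appeal to convexity to rule out an attribute of $T_\downarrow$ determining an attribute of $S$, and the same restriction argument for the converse; the only differences are that you organize the forward case check by left-hand side where the paper goes by right-hand side, and that you spell out why the two optima coincide.
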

\begin{proof}
\def\theconst{\diamond}
We fix a constant $\theconst$ in $\vals$ that we use throughout the proof. An attribute $B\in R\setminus S$ is said to be \e{determined} by $S$ if $A\ra B\in\depset^+$ for some $A\in S$. 
Given a relation $s$ over $S$, we construct a relation $r$ over $R=\att(\depset)$ with $\tids(s)=\tids(r)$ by setting, for every $i\in\tids(r)$:
\begin{itemize}
  \item $r[i][A]=s[i][A]$ for each attribute $A\in S$;
  \item $r[i][B]=\theconst$ for each attribute $B$ determined by $S$;
  \item $r[i][C]$ is a fresh new value for each attribute that is neither in $S$ nor determined by $S$.
\end{itemize}
The construction is \e{correct} in the sense that to repair $r$, it is necessary and sufficient to repair the projection of $r$ to $S$, namely $s$. Hence, a U-repair for $r$ yields a U-repair (of the same or smaller cost) for $s$, and a U-repair for $s$ yields a U-repair (of the same cost) for $r$. 

We prove correctness as follows. 
Let $i$ and $j$ be identifiers in $\tids(r)$. We need to show that every FD $A\ra B$ violated by $\set{r[i], r[j]}$ is such that both $A$ and $B$ are in $S$ (hence, $\depset_S\models A\ra B$). Thus, it suffices to show that if either $A\notin S$ or $B\notin S$, then the FD $A\ra B$ is satisfied, regardless of the content of $s$. (In particular, it continues to be satisfied even if $s$ is updated.) We consider several cases.
\begin{enumerate}
    \item If $B\in S$, then $A\notin S$ and $A$ cannot be determined by $S$, or otherwise $A$ would need to be in $S$ since we assume that $S$ is convex. Therefore, the rows disagree on $A$ because $r[i][A]$ and $r[j][A]$ are fresh new values (by construction).
    \item If $B$ is determined by $S$, then the rows agree on $B$, since $r[i][B]=r[j][B]=\theconst$.
    \item If $B$ is neither in $S$ nor determined by $S$, then $A$ is also neither in $S$ nor determined by $S$, or otherwise $A\ra B$ would imply that $B$ is determined by $S$. Hence, the rows disagree on $A$ as in the first case.
\end{enumerate}
In each of the three cases, the FD $A\ra B$ is satisfied, as claimed.
\end{proof}

From \Cref{obs:scc-convex} and \Cref{lem:convex-reduction}, we immediately conclude that it is NP-hard to compute an optimal U-repair for $\depset$ if it is NP-hard for at least one SCC of $\depset$. The other direction, however, is false (assuming $\mbox{P}\neq\mbox{NP}$), as evident from the hardness of $A\ra B\ra C$, where each SCC consists of a single attribute. 


Finally, we show that the hard cases of the previous sections are exhaustive, in the sense
that every FD set on the hardness side of \Cref{thm:main_dichotomy} has a convex set of
attributes that induces one of them. This, together with \Cref{lem:convex-reduction}, completes the proof of \Cref{thm:main_dichotomy}.

\begin{lemma}[Existence of a hard convex set]\label{lem:hard-convex-set}
Let $\depset$ be a set of nontrivial unary FDs that does not fall in the tractable side
of \Cref{thm:main_dichotomy}. Then $\depset$ has a convex set $S$ of attributes such
that, up to a renaming of the attributes, $\depset_S$ is one of the FD
sets $A\ra B\ra C$, $A\ra B\la C$, $A\lra B\ra C$, $A\ra B\lra C$,
$A\lra B\ra C\lra D$, and $\depset_{K_c}$ for some $c\ge3$.
\end{lemma}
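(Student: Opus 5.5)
We will work with the condensation of $\graph(\depset)$, the DAG whose nodes are the SCCs of $\graph(\depset)$. Since $\graph(\depset)$ is transitively closed, so is the condensation; moreover, two attributes in the same SCC have exactly the same in- and out-neighbors outside that SCC.

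By the decomposition property, some weakly connected component $\depset'$ of $\depset$ has neither a common lhs nor is a matching constraint. Any convex set of $\depset'$ is also convex for $\depset$, because attributes of different weakly connected components are never related by $A\ra B\ra C$. We may therefore assume that $\depset$ is weakly connected. The first case is an SCC $Q$ with $|Q|\ge3$. By \Cref{obs:scc-convex}, $Q$ is convex, and $\depset_Q$ is $\scdepset{|Q|}$ up to renaming. From now on, every SCC has one or two attributes.

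The second case is that some SCC $P$ has an out-neighbor SCC $Q$ in the condensation that itself has an out-neighbor SCC $T$. By transitivity, $P$ also reaches $T$. Pick $A\in P$, $B\in Q$ and $C\in T$, and let $S=\set{A,B,C}$. Then $\depset_S$ is $A\ra B\ra C$, since no backward edges exist between distinct SCCs and no two of these attributes are equivalent.

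Convexity is the delicate point. If $A\ra X\ra C$ with $X\notin S$, then $S$ is not convex. To handle this, we choose the triple so that no such intermediate exists. Concretely, choose a pair of SCCs $P\ra T$ whose \emph{interval}, the set of SCCs strictly between them, is nonempty, and fix an inclusion-minimal such interval. Any $Q$ in it then has an empty interval to both $P$ and $T$, so $\set{A,B,C}$ is convex.

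If instead the chosen SCCs contain two attributes, we keep the whole SCC. This yields larger but still convex sets: $\set{A_1,A_2,B,C}$ induces $A_1\lra A_2\ra B\ra C$. That set is hard by itself, but it is not on the list. We avoid it by selecting instead the convex set $P\cup Q$ or $Q\cup T$, which gives $A\lra B\ra C$ or $A\ra B\lra C$. Convexity of a union of two SCCs joined by an edge with empty interval follows from the minimality argument, together with the fact that within-SCC moves stay inside the SCC.

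The third case is that the condensation has no path of length two, so every SCC is a pure source or a pure sink. Weak connectivity with more than one SCC yields an edge $P\ra T$. Any set consisting of one source SCC and some sink SCCs is convex, because there are no intermediates. If some SCC has size $2$ and is incident to an edge, we get $A\lra B\ra C$ or $A\ra B\lra C$, or, if both endpoints have size 2, $A\lra B\ra C\lra D$. Otherwise all SCCs are singletons and the condensation is a bipartite DAG from sources to sinks. If some sink has two sources, we get $A\ra B\la C$, which is convex. If every sink has in-degree $1$, weak connectivity forces a single source $A$, so $A$ is a common lhs. The remaining subcase is a single SCC of size $2$ with no edges, which is a matching constraint. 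Both are excluded by assumption.

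I expect the main obstacle to be the convexity bookkeeping in the second case: choosing the minimal interval so that the induced set is \emph{exactly} one listed pattern and not a larger hard-but-unlisted one. I will resolve it by always preferring a two-SCC selection whenever an SCC of size 2 lies on a minimal edge, and by reverting to singleton representatives only when all three SCCs are singletons.
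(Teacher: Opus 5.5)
\textbf{Gap in your second case: convexity of $\set{A,B,C}$.} Your reduction to the weakly connected case, your first case, and your third case (no path of length two in the condensation) are sound. So is your use of an empty-interval edge $P\cup Q$ or $Q\cup T$ whenever a two-attribute SCC appears. The problem is the convexity claim in your second case.

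Write $I(P,T)$ for the interval. Inclusion-minimality does give $I(P,Q)=I(Q,T)=\emptyset$ for every $Q\in I(P,T)$. But convexity of $P\cup Q\cup T$ also requires that no \emph{other} SCC lies between $P$ and $T$, that is, $I(P,T)=\set{Q}$, and minimality does not give this.

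The example right after the definition of convexity is a counterexample: $\depset=\set{A\ra B,B\ra C,A\ra D,D\ra C}$.
\begin{itemize}
\item The only nonempty interval is $I(A,C)=\set{B,D}$, so it is minimal.
\item Yet $\set{A,B,C}$ is not convex because of $A\ra D\ra C$, and symmetrically $\set{A,D,C}$ is not convex.
\item In fact, no convex set of this $\depset$ induces a directed path. The hard convex set is $\set{B,D,C}$, which induces the in-star $B\ra C\la D$.
\end{itemize}
So your second case cannot always output $A\ra B\ra C$. It must also produce in-stars, which you only derive in your third case. Your own remark that the convexity bookkeeping is the delicate point is exactly where the argument breaks: the diamond is not handled.

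\textbf{How to repair it.} Suppose $|I(P,T)|\ge2$. Any two members $Q,Q'$ are $\prec$-incomparable, since otherwise one lies in $I(Q,T)$ or $I(Q',T)$, and both are empty. Hence $Q\cup Q'\cup T$ is convex. When all three are singletons, it induces $A\ra C\la B$. If one of them has two attributes, fall back to the empty-interval edge $Q\cup T$ or $Q'\cup T$. Separately, when both SCCs on your chosen edge have two attributes you get $A\lra B\ra C\lra D$ rather than the two patterns you name; that one is also on the list.

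\textbf{How the paper avoids the issue.} The paper does no interval bookkeeping. It first disposes of any two-attribute SCC via a $\prec$-minimal successor or a $\prec$-maximal predecessor. In the all-singleton case, it picks an attribute $C$ of in-degree at least two and takes its two rightmost in-neighbors in a topological order. This yields either the path or the in-star, and the rightmost choice guarantees convexity directly.
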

\begin{proof}
Without loss of generality, we assume that $\depset$ is weakly connected. Otherwise, we apply the proof to a weakly connected component of $\depset$ where the tractability condition is violated. Clearly, an attribute set that is convex w.r.t.~a weakly connected component of $\depset$ is also convex w.r.t.~$\depset$.

We use the \e{SCC DAG} of $\depset$: its nodes are the SCCs of $\depset$, and it has an edge from a node $P$ to a node $Q$ whenever $P\neq Q$ and the attributes of $P$ determine those of $Q$. Note that the SCC DAG is connected, since we assume that $\depset$ is weakly connected. We write $P\prec Q$ to state that the SCC DAG has an edge from $P$ to $Q$. Note that $\prec$ is transitive, asymmetric, and irreflexive, hence a strict partial order. 
In particular, every nonempty set of SCCs has a $\prec$-minimal member and a $\prec$-maximal member. 

We say that a node $R$ is \e{between} $P$ and $Q$ if $P\prec R\prec Q$. Note that a union of SCCs is convex if and only if no SCC outside the union is between two SCCs in the union. 
Indeed, if $\depset\models A\ra B\ra C$ and $A$ and $C$ belong to the union, then the SCC of $B$ is either that of $A$ or $C$, or between them (hence, in the union).

\def\newcase#1#2{\medskip\par\noindent \textbf{Case #1:} \e{#2}.}

Since $\depset$ violates the condition of \Cref{thm:main_dichotomy}, it has at least three attributes.
We consider three cases, according to the maximal number of attributes in an SCC of
$\depset$.

\newcase{1}{Some SCC $P$ has three or more attributes}
By   \Cref{obs:scc-convex}, the set $P$ is convex, and $\depset_P$ is $\depset_{K_c}$ for $c=|P|\ge3$.

\newcase{2}{The maximal number of attributes in an SCC is two}
Let $P$ be an SCC with two attributes, denoted $A$ and $B$. Then 
$\depset_P$ is $A\lra B$. 
%

Suppose first that $P\prec Q'$ for some SCC $Q'$. Let $Q$ be a $\prec$-minimal member of the set of nodes $Q'$ with $P\prec Q'$. The set $S\defeq P\cup Q$ is convex: 
a node between $P$ and $Q$ would belong to the set from which
we chose $Q$ and would precede $Q$, contradicting the minimality of $Q$. 
\begin{itemize}
\item If $Q$ consists of a single attribute
  $C$, then $\depset_S$ is $A\lra B\ra C$. 
\item If $Q$ consists of two attributes $C$ and
  $D$, then $\depset_S$ is $A\lra B\ra C\lra D$. 
\end{itemize}
If $Q'\prec P$ for some SCC $Q'$, then we select $Q$ as a $\prec$-maximal member of the set of the nodes $Q'$ with $Q'\prec P$. The set $S\defeq P\cup Q$ is again convex.
\begin{itemize}
\item If $Q$ consists of a single attribute
  $C$, then $\depset_S$ is $C\ra A\lra B$. 
\item If $Q$ consists of two attributes $C$ and
  $D$, then $\depset_S$ is $C\lra D \ra A\lra B$. 
\end{itemize}

\newcase{3}{Every SCC has a single attribute} In this case, the SCC DAG is the same as $\graph(\depset)$, identifying each singleton with its single member. Hence, $\graph(\depset)$ is acyclic and weakly connected. Fix a topological order over the attributes. 

We first claim that at least one attribute $C$ has two or more incoming edges. Let $A'$ be the first (leftmost) node in the topological order. If $A'$ has an outgoing edge to every other attribute, then at least one neighbor $C$ of $A'$ has an incoming edge from an attribute that is not $A'$, or else $\depset$ would be equivalent to $A'\ra X$ (satisfying the tractability condition). Otherwise, if $A'$ does not have an outgoing edge to every other attribute, then at least one neighbor $C$ of $A'$ has an incoming edge from an attribute that is not $A'$, since $\graph(\depset)$ is connected and transitively closed. This proves the claim.

Choose an arbitrary node $C$ with two or more incoming edges in $\graph(\depset)$. Let $B$ be the rightmost attribute in the topological order with an edge to $C$. Let $A$ be the rightmost attribute, different from $B$, with an edge to $C$. Let $S=\set{A,B,C}$. Hence, the FD set $\depset_S$ is either
$A\ra B\ra C$ or $A\ra C\la B$, depending on whether $A\ra B$ or not. To complete the proof, we need to show that $S$ is convex. This holds due to the choice of $B$ and $A$ as rightmost attributes: if $D\notin S$ is such that 
if $B\ra D \ra C$, then $D$ should have been chosen instead of $B$; and if $A\ra D \ra B$ or $A\ra D \ra C$, then $D$ should have been chosen instead of $A$. 
\end{proof}

With \Cref{lem:hard-convex-set}, we have now completed the proof of \Cref{thm:main_dichotomy}.

\section{Conclusions}
\label{sec:conclusions}
We have established a complete complexity classification for computing optimal U-repairs under unary FDs. Our result shows that the previously known tractable cases are exhaustive: an optimal U-repair can be computed in polynomial time when every weakly connected component of the FD set has a common lhs or is a matching constraint, and the problem is NP-hard in every other case. Thus, for unary FDs, the boundary between tractability and intractability is now fully understood.

We note that the tractable cases remain tractable even if every cell has an individual weight that determines its cost of update~\cite{DBLP:journals/tods/LivshitsKR20}. Since our results show that these are the only tractable cases even without weights, we conclude that the dichotomy trivially extends to the weighted case.

The main question left open is whether a similar classification can be obtained for arbitrary FDs. The unary case already exhibits a rich interaction between the structure of the dependencies and the complexity of repairing the data, and several of our hardness arguments rely on phenomena that extend beyond unary FDs. It would be interesting to understand whether these techniques, together with the tractability tools developed in previous work, can lead to a complete dichotomy for optimal U-repairs under general FD sets.

\bibliography{citations.bib}

\appendix

\section{Subset Repairs versus Update Repairs}

It is instructive to contrast the situation with that of \e{subset} repairs,
which are obtained by deleting a minimum number of tuples rather than by
updating cells. For subset repairs, Livshits et
al.~\cite{DBLP:journals/tods/LivshitsKR20} obtained a \e{complete} dichotomy:
for \e{every} set of FDs, computing an optimal subset repair is either solvable
in polynomial time or NP-hard (in fact, APX-complete). Their hardness argument proceeds
in two steps. First, they establish hardness \e{directly} for a small
collection of ``core'' FD sets, each by a reduction from a well-known NP-hard
problem. Second, they lift this hardness to every remaining intractable FD set
by means of \e{fact-wise reductions}: a fact-wise reduction maps each tuple of a
relation over one schema to a tuple of a relation over another schema in a way
that preserves both consistency and inconsistency. Such a
map induces a one-to-one correspondence between the subset
repairs of the source relation and those of the target, and hardness therefore
transfers immediately from the core sets to all others.

This machinery is unavailable for update repairs, which is a central reason the
update-repair picture has remained so much coarser. A fact-wise reduction
relates two relations over possibly \e{different} schemas, and in order to
encode the source dependencies it typically represents the value of a single
source attribute using \e{several} attributes of the target relation, so that
one source value determines several target values. Two obstacles then arise.

First, a single cell update in the source relation forces an entire \e{group}
of cell updates in the target relation, and the size of this group may differ
from one source attribute to another. The number of modified target cells is
therefore not a fixed multiple of the number of modified source cells, so the
reduction cannot preserve repair cost up to a fixed factor, and the tight,
cost-preserving correspondence on which the subset-repair argument relies
already fails.
Second, and more fundamentally, the correspondence breaks down in the reverse
direction. An update repair of the target relation is under no obligation to
respect these groups: it may modify the target cells that are determined by a
single source cell \e{inconsistently}, changing some of them but not others, or
changing them to values that no longer agree. Such a target repair reflects no
update of the source relation at all, since there each cell is atomic and is
either left unchanged or updated as a whole. There is thus no faithful way to
translate an optimal update repair of the target relation back into one of the
source relation of comparable cost, and the fine-grained, cell-level cost
accounting that defines an \e{optimal} update repair is not preserved. For these
reasons, hardness cannot simply be transported from a handful of core FD sets to
all the rest by means of fact-wise reductions, and different machinery is required.

\section{Missing Proofs of Section~\ref{sec:large-scc}}\label{sec:app-large-scc}

This section proves Theorem~\ref{thm:kc-hard}: for every $c\ge3$, computing an
optimal U-repair for $\scdepset{c}$ is NP-hard. The proof consists of three steps. We
first show that \mathsc{R3DM}, the restricted variant of three-dimensional matching
from which we reduce, is NP-complete (Lemma~\ref{lem:r3dm}). We then reduce
\mathsc{R3DM} to our problem for three attributes (Lemma~\ref{lem:k3-hard}). Finally,
we reduce the problem for three attributes to the problem for $c$ attributes, for every
$c>3$ (Lemma~\ref{lem:k3-to-kc}). Throughout, we specify a tuple by listing its values
for $A_1,\dots,A_c$ in this order.

\subsection{Hardness of \mathsc{R3DM}}

We first prove that the restricted matching problem that we reduce from is intractable.
Recall that an instance of \mathsc{R3DM} consists of three disjoint sets $X$, $Y$ and
$Z$ of size $q$ and a set $S\subseteq X\times Y\times Z$ of triples that satisfies the
two conditions LI and TF, and that the goal is to decide whether $S$ has a perfect
matching, that is, $q$ pairwise disjoint triples.

\begin{lemma}\label{lem:r3dm}
\mathsc{R3DM} is NP-complete.
\end{lemma}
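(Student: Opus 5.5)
Membership in NP is immediate: a set of $q$ triples of $S$ is a certificate, and checking that they are pairwise disjoint is polynomial. For hardness I will reduce from standard \mathsc{3DM}, or more conveniently from \mathsc{3DM} restricted to instances where every element occurs in at most three triples, which is known to be NP-complete. The plan is a local replacement argument. Each original triple is replaced by a gadget whose fresh elements are used only inside it. The gadget behaves like a ``switch'': in any perfect matching of the new instance, either the gadget's internal elements are covered in the ``on'' mode, which also covers the original three elements, or in the ``off'' mode, which covers only internal elements. Because each gadget triple touches at most one original element and otherwise fresh elements, LI and TF can be checked locally.

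\textbf{Step 1: the gadget.} For an original triple $s=(x,y,z)$, I introduce fresh elements $x_s^1,x_s^2,y_s^1,y_s^2,z_s^1,z_s^2$ in the appropriate coordinate sets. I then form a ring of triples that alternately uses these elements, such as the classical ``truth-setting'' cycle from Garey--Johnson's reduction. The ring admits exactly two perfect internal coverings, the even triples or the odd triples. The three \emph{connector} triples are $(x,y_s^1,z_s^1)$, $(x_s^1,y,z_s^2)$, and $(x_s^2,y_s^2,z)$, or a similar arrangement in which each connector contains exactly one original element. The internal triples are arranged so that taking all three connectors together with a fixed set of internal triples covers the gadget exactly, and a disjoint ``off'' set of internal triples covers the gadget without touching $x,y,z$. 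Padding with dummy elements equalizes $|X|=|Y|=|Z|$.

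\textbf{Step 2: LI and TF.} Any two triples share at most one element. Connectors from different gadgets share at most one original element, and internal elements are private. Inside a gadget, I will choose the ring so that consecutive triples overlap in exactly one element. Triangle-freeness is the delicate part. A triangle on $\{x,y,z\}$ in the new instance would need pairs co-occurring across triples. Since no triple contains two original elements, every pair involves an internal element of a single gadget. The question therefore reduces to verifying that the fixed finite gadget is triangle-free, which amounts to inspecting finitely many triples once.

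\textbf{Step 3: correctness.} Given a matching $M$ of the original instance, I use ``on'' mode for $s\in M$ and ``off'' mode otherwise. Conversely, in a perfect matching of the new instance, the rigidity of the ring forces every gadget to be entirely on or entirely off. Reading off the on gadgets gives disjoint original triples that cover all original elements. I expect the main obstacle to be designing a concrete ring that simultaneously (i) has exactly two internal modes with the all-or-nothing property for the connectors, (ii) is triangle-free, and (iii) respects the three-partite structure. I would first try a ring of length $6$ with alternating coordinate roles, check TF by hand, and lengthen the ring, e.g.\ to $9$ or $12$, if a triangle appears.
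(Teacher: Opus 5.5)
Your framework is the same as the paper's: replace each triple $s$ by a gadget on private fresh elements so that every gadget triple contains at most one original element. Your reduction of TF to a check inside a single gadget is also correct. The gap is that the gadget itself, which carries the whole argument, is never given; you leave it as a trial-and-error search. Moreover, the sketch does not fit your own connectors. $(x,y^1_s,z^1_s)$, $(x^1_s,y,z^2_s)$ and $(x^2_s,y^2_s,z)$ already use all six fresh elements. So the ``fixed set of internal triples'' accompanying them in the on mode must be empty. An off mode made of disjoint, purely internal triples on six elements then consists of exactly two triples, so there is no alternating ring of length $6$ (or more) here. A Garey--Johnson truth-setting ring is not a self-contained switch either: every ring triple carries a ``tip'', and the tips of the unused half stay uncovered. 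Garey and Johnson handle those with extra garbage-collection gadgets. Finally, padding with dummy elements is not harmless for perfect matching: an element that occurs in no triple turns every instance into a no-instance.

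The missing piece is small. Keep your connectors and let the off mode be exactly $(x^1_s,y^2_s,z^1_s)$ and $(x^2_s,y^1_s,z^2_s)$; given your connectors, LI leaves no other choice of two disjoint internal triples. This is precisely the paper's five-triple gadget $\sigma^s_1,\dots,\sigma^s_5$.
\begin{itemize}
\item \emph{Sizes.} Each side grows by $2$ per triple, so all sides have size $q+2m$ and no padding is needed.
\item \emph{LI.} Within a gadget, the only intersecting pairs are a connector with an off triple, and they share exactly one element.
\item \emph{TF.} The connectors are pairwise disjoint and so are the off triples. Hence among any three gadget triples two are disjoint, whereas the three triples of a triangle pairwise intersect.
\item \emph{Switch.} The element $x^1_s$ occurs only in $(x^1_s,y,z^2_s)$ and $(x^1_s,y^2_s,z^1_s)$. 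If the latter is chosen, then $x^2_s$ can be covered only by $(x^2_s,y^1_s,z^2_s)$, since the connector containing $x^2_s$ clashes on $y^2_s$; this covers all internal elements and none of $x,y,z$. Otherwise, $x^1_s$, $y^2_s$ and $z^1_s$ each force one of the three connectors.
\end{itemize}
With this gadget, your Steps 2 and 3 go through as outlined. Without it, they are only templates.
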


\begin{proof}
Membership in NP is clear, so we prove hardness by a reduction from \mathsc{3DM}.
Let $(X,Y,Z,S)$ be an instance of \mathsc{3DM}, where $|X|=|Y|=|Z|=q$ and $|S|=m$. For
every triple $s=(x,y,z)\in S$ we introduce six fresh elements, namely $a^s_1$ and
$a^s_2$ that we add to $X$, $b^s_1$ and $b^s_2$ that we add to $Y$, and $c^s_1$ and
$c^s_2$ that we add to $Z$; these $6m$ elements are distinct from one another and from
the elements of $X\cup Y\cup Z$. We call them the \e{internal} elements of $s$ and the
elements of $X\cup Y\cup Z$ the \e{original} ones. We replace $s$ by the \e{gadget}
$G_s$ that consists of the five triples
\[
  \sigma^s_1=(x,b^s_1,c^s_1),\quad
  \sigma^s_2=(a^s_1,y,c^s_2),\quad
  \sigma^s_3=(a^s_2,b^s_2,z),\quad
  \sigma^s_4=(a^s_1,b^s_2,c^s_1),\quad
  \sigma^s_5=(a^s_2,b^s_1,c^s_2).
\]
The three sides of the constructed instance are of size $q+2m$ each, the constructed
set $\hat{S}$ of triples is of size $5m$, and a perfect matching consists of $q+2m$
triples. The construction is clearly in polynomial time.

 Next, we show that $\hat{S}$ satisfies LI and TF.
We use two properties of the construction. First, every triple of $\hat{S}$ contains at
most one original element, since $\sigma^s_1$, $\sigma^s_2$ and $\sigma^s_3$ contain one
each and $\sigma^s_4$ and $\sigma^s_5$ contain none. Second, the internal elements of
$s$ are introduced for $s$ alone, so they occur only in the triples of $G_s$.

For LI, consider two distinct triples of $\hat{S}$. If they belong to distinct gadgets,
then, by the second property, every element that they share is original, and, by the
first, each of them has at most one such element; hence, they share at most one element.
If they belong to the same gadget $G_s$, then a direct inspection of the five triples
above shows that they share at most one element; the inspection also shows that they
share an element precisely when one of them is $\sigma^s_1$, $\sigma^s_2$ or
$\sigma^s_3$ and the other is $\sigma^s_4$ or $\sigma^s_5$.

For TF, suppose that $\hat{S}$ contains three distinct triples $t_1$, $t_2$ and $t_3$
and three distinct elements $u$, $v$ and $w$ such that $u,v\in t_1$, $u,w\in t_2$ and
$v,w\in t_3$. Every one of the three triples contains two of the three elements, so at
most one of $u$, $v$ and $w$ is original: if two of them were, then some triple would
contain both, contradicting the first property. At least two of the elements are
therefore internal, say $u$ and $v$. By the second property, $t_1$ and $t_2$ belong to
the same gadget, since they share the internal element $u$, and so do $t_1$ and $t_3$,
since they share the internal element $v$; thus, all three belong to a single gadget
$G_s$. This is impossible: by the inspection above, no two of $\sigma^s_1$,
$\sigma^s_2$ and $\sigma^s_3$ share an element and neither do $\sigma^s_4$ and
$\sigma^s_5$, so among any three triples of $G_s$ there are two that share no element.

Now, given a perfect matching of $S$, we construct one of  $\hat{S}$.
Let $M\subseteq S$ be a perfect matching of the original instance, so $|M|=q$. We
select the triples $\sigma^s_1$, $\sigma^s_2$ and $\sigma^s_3$ for every $s\in M$, and
the triples $\sigma^s_4$ and $\sigma^s_5$ for every $s\in S\setminus M$. Each of the
two selections covers the six internal elements of $G_s$ exactly once, and the first
covers, in addition, the three original elements of $s$. Since $M$ is a perfect
matching, every original element is covered exactly once, and the number of selected
triples is $3q+2(m-q)=q+2m$, as required.

Conversely, let $\hat{M}$ be a perfect matching of $\hat{S}$. We construct a perfect matching of $S$. Let
$s\in S$. As noted above, the six internal elements of $s$ occur only in the triples of
$G_s$, so $\hat{M}$ covers each of them by a triple of $G_s$; we determine which
triples of $G_s$ it uses. If $\sigma^s_4\in\hat{M}$, then $a^s_1$, $b^s_2$ and $c^s_1$
are covered,
and the only triple of $G_s$ that covers $a^s_2$ without reusing them is $\sigma^s_5$;
the two together cover the six internal elements, and none of $s$'s original elements
is covered by $G_s$. If $\sigma^s_4\notin\hat{M}$, then $a^s_1$ forces
$\sigma^s_2\in\hat{M}$, the element $b^s_2$ forces $\sigma^s_3\in\hat{M}$, and $c^s_1$
forces $\sigma^s_1\in\hat{M}$; these three cover the six internal elements, and they
cover the original elements of $s$ as well. Hence, every gadget either covers all three
original elements of its triple or covers none of them. As every original element is
covered exactly once, the set $M$ of the triples $s$ whose gadget covers them is a
perfect matching of the original instance.
\end{proof}

\subsection{Proof of Theorem~\ref{thm:kc-hard}}

We first prove the statement for three attributes.

\begin{lemma}\label{lem:k3-hard}
Computing an optimal U-repair for $\scdepset{3}$ is NP-hard.
\end{lemma}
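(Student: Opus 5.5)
The plan follows the outline of \Cref{sec:large-scc}: reduce from \mathsc{R3DM}, which is NP-complete by \Cref{lem:r3dm}, using the gadget of \Cref{fig:construction}. The reduction has $m$ triples, budget $K=8m-5q$ and enforcing groups of size $N=K+1$. Throughout, I use the clustering view of a consistent update for $\scdepset{3}$: its tuples are partitioned into clusters, each cluster has a target triple $v$, and distinct targets differ in all three coordinates. A tuple pays $3$ minus the number of coordinates it shares with its target. I define the \emph{saving} of a gadget tuple as $2$ minus its cost, i.e., the number of coordinates it retains minus one.

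\textbf{From a perfect matching to a repair.} Given a perfect matching $M$, I take the triples of $M$ as targets and leave every enforcing group unchanged as its own cluster. For $s\in M$:
\begin{itemize}
\item the $S$-tuple of $s$ costs $0$;
\item each of its three auxiliary tuples changes only its fresh constant, at cost $1$;
\item the gadget therefore costs $3$.
\end{itemize}
For $s\notin M$, each gadget tuple is sent to the matching triple that covers one of its original values. By LI, that triple shares at most one value with $s$, so each of the four tuples costs exactly $2$ and the gadget costs $8$. The total is $3q+8(m-q)=8m-5q=K$. Consistency holds because targets from $M$ are pairwise disjoint, and enforcing values and fresh constants never appear in them.

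\textbf{From a repair to a perfect matching.} Let $r'$ be consistent with cost at most $K$. I proceed in four steps.
\begin{enumerate}
\item Since $N>K$, every enforcing group keeps an unchanged copy, so its target is the group's own triple (say $(c_1,d_{1,2},d_{1,3})$).
\item Any gadget tuple in such a cluster retains at most the fresh constant, so it has saving at most $0$. The same holds for any cluster whose target contains a non-instance value in some coordinate and otherwise shares at most one instance value with the tuple.
\item I may assume, by normalization, that every gadget tuple costs at most $2$. A tuple with cost $3$ can be moved to a brand-new singleton cluster with fresh values in two coordinates while keeping one original value. Its third original cell still needs care: if it collides with an existing target on that coordinate, I instead give the tuple fresh values in all coordinates except one that is not in use. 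I expect this bookkeeping to be routine.
\item The saving of $r'$ is then $8m-\distu(r',r)\ge 5q$, and it equals the sum of cluster savings.
\end{enumerate}

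\textbf{Bounding the saving per cluster.} This is the core step. Let $v$ be a target with instance values. A tuple saves only if it retains at least two coordinates of $v$, and the two retained coordinates form a pair that co-occurs in its triple $s$. This holds for the $S$-tuple and also for the auxiliary tuples, since the fresh constant is not in $v$ in a useful way. By LI, each pair of elements lies in at most one triple of $S$. This gives the bounds for the four types:
\begin{itemize}
\item \emph{Type (d), $v=s$.} Only the gadget of $s$ contributes: saving $2+1+1+1=5$.
\item \emph{Type (c).} There are two co-occurring pairs, with two distinct triples by TF and LI. Each contributes at most its $S$-tuple (saving $1$) and one auxiliary tuple (saving $1$), so the saving is at most $4$.
\item \emph{Type (b).} The saving is at most $2$.
\item \emph{Type (a).} The saving is $0$.
\end{itemize}

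\textbf{Counting.} Since targets are pairwise disjoint in each coordinate, the clusters consume disjoint sets of instance elements, and there are $3q$ elements in total. The consumption per cluster type is:
\begin{itemize}
\item type (d) uses $3$ elements for saving $5$;
\item type (c) uses $3$ elements for saving at most $4$;
\item type (b) uses at least $2$ elements for saving at most $2$.
\end{itemize}
So the total saving is at most $\tfrac53\cdot 3q=5q$, with equality only if all $3q$ elements are consumed by type-(d) clusters. Hence there are exactly $q$ type-(d) targets, they are pairwise disjoint triples of $S$, and they form a perfect matching.

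\textbf{Main obstacle.} The per-cluster saving bound and the normalization of Step 3 are the delicate points. For the saving bound, I must verify that no auxiliary tuple of a different triple, and no tuple of an enforcing group, can retain two coordinates of a type-(c) or type-(d) target. This is exactly where LI and TF are used.
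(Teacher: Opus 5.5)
Your proposal is correct and follows the paper's proof essentially step for step: the same R3DM gadget with budget $K=8m-5q$ and enforcing groups of size $K+1$, the same per-cluster saving bounds $5,4,2,0$ for types (d)--(a) (with TF ruling out a ``triangle'' target whose three pairs lie in three distinct triples and would save $6$), and your averaging bound of $\tfrac53$ saving per consumed element is a compact form of the paper's two inequalities $2n_b+4n_c+5n_d\ge 5q$ and $2n_b+3n_c+3n_d\le 3q$. The one loose end is your Step~3: as written it fails when all three original values of the tuple are already used by other targets (the fix is to move the tuple into the cluster of such a target, as the paper's (O2) does), but the normalization is unnecessary anyway, since the type bounds are upper bounds and a cost-$3$ tuple only contributes negative saving.
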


\begin{proof}
We reduce from \mathsc{R3DM}, which is NP-complete by Lemma~\ref{lem:r3dm}. Let
$(X,Y,Z,S)$ be an instance of R3DM, where $|X|=|Y|=|Z|=q$ and $|S|=m$, and
where we may assume that $m\ge q\ge1$, since otherwise $S$ has no perfect matching. We
refer to the members of $X\cup Y\cup Z$ as \e{$S$-elements}. We construct a relation $r$
over $A_1$, $A_2$ and $A_3$ and a budget $K$ such that $S$ has a perfect matching if
and only if $r$ has a consistent update $r'$ with $\distu(r',r)\le K$. For $c=3$, the
target tuple of a cluster is a triple, and we call it the \e{target triple} of the
cluster; recall that the target triples of distinct clusters disagree on $A_1$, $A_2$
and $A_3$.

We set:
\[
  K\;\defeq\;8m-5q
  \qquad\text{and}\qquad
  N\;\defeq\;K+1 .
\]
For every triple $s=(x,y,z)\in S$, the relation $r$ contains the four \e{gadget tuples}
\[
  t_s=(x,y,z)\qquad
  u^1_s=(c^s_1,y,z)\qquad
  u^2_s=(x,c^s_2,z)\qquad
  u^3_s=(x,y,c^s_3)
\]
where $c^s_1$, $c^s_2$ and $c^s_3$ are \e{fresh constants}. We refer to $t_s$ as an \e{$S$-tuple}, and to $u^1_s$, $u^2_s$ and $u^3_s$ its \e{auxiliary tuples}; note that
$u^j_s$ contains $c^s_j$ in $A_j$ and agrees with $t_s$ on the other two attributes. In
addition, $r$ contains, for every $j\in\set{1,2,3}$, an \e{enforcing group}: a set of
$N$ copies of a tuple that contains $c^s_j$ in $A_j$ and, in the two other attributes,
the values $d^s_{j,i}$; that is, $N$ copies of each of the following three tuples:
\[
  (c^s_1,d^s_{1,2},d^s_{1,3})\qquad
  (d^s_{2,1},c^s_2,d^s_{2,3})\qquad
  (d^s_{3,1},d^s_{3,2},c^s_3)\qquad
\]
The $3m$ fresh constants $c^s_j$ and the $6m$ values $d^s_{j,i}$ are distinct from one
another and from the elements. Figure~\ref{fig:construction} depicts the tuples
introduced for a single triple $s$, with the superscript $s$ omitted. Altogether, $r$
consists of the $4m$ gadget tuples and of the $3mN$ tuples of the enforcing groups, and
it is constructed in polynomial time.

Now, given a perfect matching of $S$, we construct a consistent update of $r$.
Let $M\subseteq S$ be a perfect matching, and recall that $|M|=q$. We leave every enforcing group unchanged, so that each of its tuples is its own target triple. 
\begin{itemize}
\item For $s=(x,y,z)\in M$, we assign the target triple $(x,y,z)$ to the four gadget tuples of $s$. The $S$-tuple $t_s$ is then left unchanged, and each auxiliary tuple $u^j_s$ changes the single cell that contains $c^s_j$, at a total \underline{cost of $3$} for each of the $q$ triples in $M$.

\item Now let $s'=(x',y',z')\in S\setminus M$. As $M$ covers every $S$-element, it contains a triple $s_{x'}=(x',y_{x'},z_{x'})$ that includes $x'$, and it contains a triple $s_{y'}=(x_{y'},y',z_{y'})$ that includes $y'$. Both tuples are distinct from $s'$, so, by LI, each of them shares with $s'$ only one element ($x'$ or $y'$), and hence 
$x_{y'}\neq x'$, $y_{x'}\neq y'$, $z_{x'}\neq z'$ and $z_{y'}\neq z'$. We assign:
\begin{itemize}
\item The target triple $s_{x'}$ to $t_{s'}$, $u^2_{s'}$ and $u^3_{s'}$;
\item The target triple $s_{y'}$ to $u^1_{s'}$.
\end{itemize}
Since no fresh constant is an $S$-element, each of $t_{s'}$, $u^2_{s'}$ and $u^3_{'s}$ keeps only its value $x'$ in $A_1$, and $u^1_{s'}$ keeps only its value $y'$ in $A_2$. Hence, each of the four tuples in the gadget changes two cells, at a total cost of $8$. Hence, we get a total \underline{cost of $8$} for each of the $m-q$ triples $s'$ in $S\setminus M$. 
\end{itemize}
In conclusion, the resulting update $r'$ satisfies
$\distu(r',r)=3q+8(m-q)=8m-5q=K$, as required.  It remains to verify that $r'$ is consistent, that is, that the target triples of distinct clusters disagree on $A_1$, $A_2$ and $A_3$. These target triples are the $q$ triples of $M$ and the $3m$ tuples of the enforcing groups. Two triples of $M$ are
disjoint and therefore disagree on every attribute; a triple of $M$ contains elements only, whereas a tuple of an enforcing group contains no elements; and two tuples of distinct enforcing groups disagree on every attribute, since the values $c^s_j$ and
$d^s_{j,i}$ are all distinct.

Next, given a consistent update of $r$, we construct a perfect matching of $S$.
Let $r'$ be a consistent update of $r$ with $\distu(r',r)\le K$. Then $r'$ includes several clusters, each consisting of copies of a target tuple.  We will show that at least $q$ of the target tuples must be $S$-tuples. From the consistency of $r'$ we can then conclude that these $q$ $S$-tuples form a perfect matching $M$, as required. We begin with two observations. 
\begin{itemize}
  \item[(O1)] \e{We can assume that every enforcing group is unchanged in $r'$.} Since $N>K$, at least one tuple from the enforcing group remains unchanged, so if any tuple in this group is changed, we can reverse its changes without causing any violation.
\item[(O2)] \e{We can further assume that every gadget tuple keeps at least one of its three cells.} Let $t\in r$ be a gadget tuple, and suppose that all three values of $t$ are updated in $r'$. If any target tuple $t'\in r'$ coincides with $t$ on at least one attribute, then we can update $t$ to $t'$ and reduce the cost by one. If no such $t'$ agrees, then we can again reverse the updates of $t'$ and reduce the cost.
\end{itemize}
The cost of $r'$ is then the number of cells changed within the gadget tuples. By (O2), we can assume that the cost $c_t$ incurred by each individual gadget tuple $t$ is at most 2. For convenience, in the remainder of this proof, we refer to $2-c_t$ as the \e{saving} of $t$. We further refer to the \e{saving of a cluster} as the total savings of all tuples in the cluster. With zero savings, we have $4m$ tuples with a total cost of $8m$. Since the total cost of $r'$ is at most $K=8m-5q$, we conclude that the total savings over all clusters are at least $5q$. 

Next, we analyze the savings of clusters by the type of their target tuple. Before that, we need another observation. 
\begin{itemize}
  \item[(O3)] \e{We can assume that whenever a target tuple is not equal to an enforcing tuple, then each value is either an $S$-element or a new value that does not occur in $r$.} This means that values of the form $d^s_{j,i}$ and $c^s_j$ cannot be used in clusters where the target tuple is not equal to an enforcing tuple. This follows directly from (O1), since enforcing tuples from $r$ remain intact in $r'$; hence, if any $d^s_{j,i}$ or $c^s_j$ is retained in a gadget tuple, then the tuple needs to agree on all attributes with the enforcing tuple that contains $d^s_{j,i}$ or $c^s_j$.
\end{itemize}

We consider five types of clusters, each defined by a property of its target cluster $v$. For each cluster, we compute a bound on its savings.




\begin{figure}[t]
\centering
\begin{tikzpicture}[
    cell/.style={rectangle split, rectangle split parts=3,
                 rectangle split horizontal, draw=black!75, semithick,
                 rounded corners=1pt,
                 minimum height=5mm, text width=7mm, align=center,
                 font=\scriptsize, inner sep=1.5pt, fill=white},
    target/.style={cell, fill=black!12, draw=black, very thick},
    rowname/.style={font=\scriptsize, anchor=east},
    save/.style={font=\scriptsize, anchor=west},
    total/.style={font=\scriptsize\bfseries, anchor=north west},
    title/.style={font=\scriptsize\itshape, text=black!70}
  ]
  \def\ag#1{\textcolor{blue!55!black}{#1}}
  \def\di#1{\textcolor{black!70}{#1}}

  \node[target] (a0) at (0,0)
    {$\ag{x}$ \nodepart{two} $\ag{y}$ \nodepart{three} $\ag{z}$};
  \node[cell, below=2.6mm of a0] (a1)
    {$\ag{x}$ \nodepart{two} $\ag{y}$ \nodepart{three} $\ag{z}$};
  \node[cell, below=1.2mm of a1] (a2)
    {$\di{c_1}$ \nodepart{two} $\ag{y}$ \nodepart{three} $\ag{z}$};
  \node[cell, below=1.2mm of a2] (a3)
    {$\ag{x}$ \nodepart{two} $\di{c_2}$ \nodepart{three} $\ag{z}$};
  \node[cell, below=1.2mm of a3] (a4)
    {$\ag{x}$ \nodepart{two} $\ag{y}$ \nodepart{three} $\di{c_3}$};
  \node[rowname, left=1mm of a0] {$v$};
  \node[rowname, left=1mm of a1] {$t_s$};
  \node[rowname, left=1mm of a2] {$u^1_s$};
  \node[rowname, left=1mm of a3] {$u^2_s$};
  \node[rowname, left=1mm of a4] {$u^3_s$};
  \node[save, right=1mm of a1] {$+2$};
  \node[save, right=1mm of a2] {$+1$};
  \node[save, right=1mm of a3] {$+1$};
  \node[save, right=1mm of a4] {$+1$};
  \node[total] at ([yshift=-1.6mm]a4.south west) {saving $5$};

  \node[target] (b0) at (4.3,0)
    {$\ag{x}$ \nodepart{two} $\ag{y}$ \nodepart{three} $\ag{z'}$};
  \node[cell, below=2.6mm of b0] (b1)
    {$\ag{x}$ \nodepart{two} $\ag{y}$ \nodepart{three} $\di{z}$};
  \node[cell, below=1.2mm of b1] (b2)
    {$\ag{x}$ \nodepart{two} $\ag{y}$ \nodepart{three} $\di{c_3}$};
  \node[cell, below=1.2mm of b2] (b3)
    {$\ag{x}$ \nodepart{two} $\di{y'}$ \nodepart{three} $\ag{z'}$};
  \node[cell, below=1.2mm of b3] (b4)
    {$\ag{x}$ \nodepart{two} $\di{c'_2}$ \nodepart{three} $\ag{z'}$};
  \node[rowname, left=1mm of b0] {$v$};
  \node[rowname, left=1mm of b1] {$t_s$};
  \node[rowname, left=1mm of b2] {$u^3_s$};
  \node[rowname, left=1mm of b3] {$t_{s'}$};
  \node[rowname, left=1mm of b4] {$u^2_{s'}$};
  \node[save, right=1mm of b1] {$+1$};
  \node[save, right=1mm of b2] {$+1$};
  \node[save, right=1mm of b3] {$+1$};
  \node[save, right=1mm of b4] {$+1$};
  \node[total] at ([yshift=-1.6mm]b4.south west) {saving $4$};

  \node[target] (c0) at (8.6,0)
    {$\ag{x}$ \nodepart{two} $\ag{y}$ \nodepart{three} $\di{w}$};
  \node[cell, below=2.6mm of c0] (c1)
    {$\ag{x}$ \nodepart{two} $\ag{y}$ \nodepart{three} $\di{z}$};
  \node[cell, below=1.2mm of c1] (c2)
    {$\ag{x}$ \nodepart{two} $\ag{y}$ \nodepart{three} $\di{c_3}$};
  \node[rowname, left=1mm of c0] {$v$};
  \node[rowname, left=1mm of c1] {$t_s$};
  \node[rowname, left=1mm of c2] {$u^3_s$};
  \node[save, right=1mm of c1] {$+1$};
  \node[save, right=1mm of c2] {$+1$};
  \node[total] at ([yshift=-1.6mm]c2.south west) {saving $2$};

  \node[title, above=3mm of a0] {$v$ is an $S$-tuple};
  \node[title, above=3mm of b0] {two pairs occur in triples};
  \node[title, above=3mm of c0] {one pair occurs in a triple};
\end{tikzpicture}
\caption{The three savings that a cluster can attain, with the target triple $v$ on top
and, below it, the gadget tuples that save through it. Cells that agree with $v$ are
shown in blue and cells that differ from it in gray, and the saving of each tuple is
written on its right. On the left, all three pairs of elements of $v$ occur in the
triple $s$; in the middle, the triples $s=(x,y,z)$ and $s'=(x,y',z')$ share only the
element $x$, and $v$ contains a pair of each; on the right, $w$ is either not an element
or an element that occurs in no triple together with $x$ or with $y$.}
\label{fig:savings}
\end{figure}

\begin{itemize}
  \item[(a)] \e{No two elements of $v$ co-occur together inside a triple of $S$.} Due to the construction of the gadget tuples, and due to (O3), every gadget tuple differs from $v$ in at least two attributes. Hence, this cluster saves  \underline{nothing}.

\item[(b)] \e{Exactly one pair of values in $v$ co-occur together in a triple of $S$.} Consider the two $S$-elements of $v$ that co-occur together in $S$. By LI, exactly one $S$-triple contains both of them. By construction, precisely two of its gadget tuples contain both---its $S$-tuple and one auxiliary tuple; if assigned to the cluster, each of them keeps two cells and saves one. 
By (O3), every other tuple in the cluster differs from $v$ in at least two attributes, so it saves nothing. So, this cluster saves \underline{at most 2} $=1+1$.
      
\item[(c)] \e{Exactly two distinct pairs of values in $v$ co-occur inside triples of $S$.} Due to TF, the third pair of values \e{does not} co-occur inside any triple of $S$. Hence, this case is similar to (b), but with twice the bound. The cluster therefore saves \underline{at most 4} $=2+2$.

\item[(d)] \e{$v$ is an $S$-tuple.} Let $s=(x,y,z)\in S$ be this tuple.
  If assigned to this cluster, the tuple $t_s$ keeps all three of its cells and saves 2, and each auxiliary tuple $u^j_s$ keeps the two cells that it inherits from $t_s$ and saves 1. Every other gadget tuple belongs to a triple $s'\neq s$, which,
        by LI, shares at most one element with $s$; hence, assigning it to the cluster saves nothing. The cluster
        saves \underline{at most 5} $=2+1+1+1$. 
\end{itemize}
Note that cases (a)-(d) are pairwise disjoint. Also, due to the TF, they cover all possible clusters. Let $n_x$ be the number of distinct target tuples of type $(x)$. To complete the proof, we will show that $n_d=q$; hence, the the target tuples of type (d) form a perfect matching.

Let $T$ be the total savings. The case analysis shows that 
$0n_a+2n_b+4n_c+5n_d \geq T$.
On the other hand, we know that $T\geq 5q$. Hence,
\begin{equation}
  2n_b+4n_c+5n_d \geq 5q\, .
\label{eq:geq-5q}
\end{equation}

Let $N$ be the number of $S$-elements that occur in the target triples. Then $3q\geq N$, which is the total number of $S$-elements.
Two distinct clusters disagree on every attribute, so no
$S$-element is contained in more than one target tuple. Each target tuple of type (b) includes at least two $S$-elements, and each target tuple of types (c) and (d) includes at least three $S$-elements. We conclude that 
$N\geq 2n_b+3n_c+3n_d$. Hence,
\begin{equation}
  3q \geq 2n_b+3n_c+3n_d.
\label{eq:3q-geq}
\end{equation}

Combining \Cref{eq:geq-5q} and \Cref{eq:3q-geq}, we get that 
\[
3q \geq 2n_b+3n_c+3n_d = 
2n_b+4n_c+5n_d - (n_c+2n_d)
\geq 5q - (n_c+2n_d)
\]
Hence, $2q \leq n_c+2n_d\leq 2n_c+2n_d$, and therefore 
$n_c+n_d \geq q$. Using \Cref{eq:3q-geq}, we conclude that
$3q \geq 2n_b+3n_c+3n_d \geq 2n_b+3q$, hence $n_b=0$. With that, from \Cref{eq:3q-geq} we  conclude that $ 3q \geq 3n_c+3n_d$, hence
$n_c+n_d \leq q$. Thus, $n_c+n_d=q$.
Finally, recalling again \Cref{eq:geq-5q}, we conclude that
\[
5q \leq 4n_c+5n_d =
4n_c+4n_d + n_d = 4q + n_d
\]
and, therefore, $n_d\geq q$ (actually,
$n_d=q$ and $n_c=0$), as claimed.
\end{proof}



It remains to pass from three attributes to any larger number, which we do by padding
every tuple with values that no consistent update within the budget can preserve.

\begin{lemma}\label{lem:k3-to-kc}
Let $c>3$. Given a relation $r$ over $A_1$, $A_2$ and $A_3$ with $n$ tuples and a
budget $K$, one can construct in polynomial time a relation $r^*$ over $A_1,\dots,A_c$
such that $r$ has a consistent update of cost at most $K$ with respect to
$\scdepset{3}$ if and only if $r^*$ has a consistent update of cost at most $K+n(c-3)$
with respect to $\scdepset{c}$.
\end{lemma}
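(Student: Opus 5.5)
We follow the padding idea sketched at the end of \Cref{sec:large-scc}. Set $N\defeq K+n(c-3)+1$. The relation $r^*$ contains one \emph{main tuple} for every tuple $i$ of $r$. It copies $r[i][A_1]$, $r[i][A_2]$ and $r[i][A_3]$, and for every $j\in\set{4,\dots,c}$ it carries a fresh value $e^i_j$ in $A_j$. For every such fresh value $e^i_j$ we add an \emph{enforcing group} of $N$ identical tuples. Each of them carries $e^i_j$ in $A_j$ and, in every other attribute, a value that occurs nowhere else in $r^*$. All fresh values are pairwise distinct and distinct from the values of $r$.

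For the forward direction, let $s$ be a consistent update of $r$ with cost at most $K$. We leave every enforcing group unchanged and update the $A_1A_2A_3$ part of the main tuples as $s$ does. Recall that $s$ partitions the tuples into clusters with pairwise disjoint target triples. For every cluster and every $j\ge4$, we pick one new value that is unique to the cluster and write it into $A_j$ of all members. This changes exactly $c-3$ cells per main tuple, so the total cost is at most $K+n(c-3)$. Consistency for $\scdepset{c}$ holds for three reasons. Target tuples of distinct clusters disagree on $A_1,A_2,A_3$ by the consistency of $s$, and on $A_j$ for $j\ge4$ by construction. Enforcing tuples share no value with main tuples. Enforcing groups are pairwise disjoint on all attributes.

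For the converse, let $r^{*\prime}$ be a consistent update of cost at most $K+n(c-3)$. First, as in (O1), each enforcing group has $N>$ budget tuples, so one copy survives unchanged; reverting changes to the other copies keeps consistency and does not raise the cost, so we assume all enforcing groups are intact. The key claim is that we may assume no main tuple lies in the cluster of an enforcing tuple. Suppose a main tuple $t$ is in such a cluster. Then $t$ equals an enforcing tuple on all $c$ attributes. Its $A_1,A_2,A_3$ values originate from $r$, and none of them occur in any enforcing tuple, so all three were changed. We then reassign $t$ to a brand-new target tuple that restores its original $A_1A_2A_3$ values? That could clash with another cluster on those values. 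Instead we follow the approach of (O2): if some non-enforcing target tuple agrees with $t$'s original values on $A_1$, $A_2$ or $A_3$, we move $t$ to that cluster. Otherwise we give $t$ its original $A_1A_2A_3$ values together with entirely new values elsewhere. Either way at most $c$ cells change, which is no more than before, and the cost does not increase.

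Once the claim holds, no main tuple retains any $e^i_j$. If it did, it would agree with the surviving enforcing tuple on $A_j$ and would therefore have to be in that tuple's cluster. So every main tuple pays exactly $c-3$ on the padding attributes, and at most $K$ remains for $A_1A_2A_3$. Since $\scdepset{3}$ is implied by $\scdepset{c}$ on these attributes, the restriction of the main tuples to $A_1,A_2,A_3$ is a consistent update of $r$ with cost at most $K$.

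The main obstacle is the rerouting step in the claim. The difficulty is ensuring that the reassigned tuple never picks up a value from an enforcing cluster and that the repair stays consistent while the cost does not increase. In the worst case the tuple previously paid $c$ and after rerouting pays at most $c$ (or $c-1$ when it joins an agreeing cluster). Rerouting one tuple at a time suffices for this.
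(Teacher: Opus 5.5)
Your construction, your forward direction, and the overall shape of your converse all coincide with the paper's proof: the enforcing groups stay intact, you reroute main tuples that sit in an enforcing tuple's cluster, and you then charge exactly $c-3$ to the padding attributes. The two rerouting options also match: either join a non-enforcing cluster that agrees with $t$ on $A_1A_2A_3$, or open a fresh cluster with $t$'s original $A_1A_2A_3$ values.

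The one step that does not go through as written is the cost comparison in the rerouting.

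\begin{itemize}
\item You bound the new cost of $t$ by ``at most $c$'' and compare it with a previous cost of $c$. However, you have only shown that $t$ previously changed its three cells in $A_1,A_2,A_3$.
\item In fact, a captured main tuple can cost as little as $c-1$. For example, if the main tuple of $i$ is updated to the enforcing tuple of its own private value $e^i_j$, it keeps the cell $A_j$.
\item So ``at most $c$'' after against ``at least $c-1$'' before does not show that the cost is non-increasing.
\item Your closing remark that the worst case is a previous cost of $c$ picks the wrong case. What matters is the smallest possible previous cost, not the largest.
\end{itemize}

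The fix is the paper's accounting.

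\begin{itemize}
\item \emph{Cost before.} An enforcing tuple carries a value occurring nowhere else in every attribute other than the one holding its private value. Hence a captured tuple retains at most one cell, so it costs at least $c-1$.
\item \emph{Cost after.} In your first option, $t$ keeps an agreeing cell, so it costs at most $c-1$. In your second option, it costs exactly $c-3$.
\end{itemize}

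With this correction, the rest of your argument is sound.
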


\begin{proof}
Let $N\defeq K+n(c-3)+1$. For every tuple $t$ of $r$, the relation $r^*$ contains the
tuple $t^*$ that agrees with $t$ on $A_1$, $A_2$ and $A_3$ and contains a \e{private}
value in each of $A_4,\dots,A_c$. In addition, for every private value $p$, say the one
that $t^*$ contains in $A_j$, the relation $r^*$ contains an \e{enforcing group} of $N$
copies of the \e{enforcing tuple} $e_p$ that contains $p$ in $A_j$ and, in every other
attribute, a value that occurs nowhere else. All of these values are distinct and occur
in no tuple of $r$.

Now, given a consistent update of $r$, we construct a consistent update of $r^*$.
Let $r'$ be a consistent update of $r$ with $\distu(r',r)\le K$. We leave every
enforcing group unchanged and assign to the tuples $t^*$ the target triples of $r'$,
each extended by values that occur nowhere in $r^*$ and are distinct across clusters.
Every $t^*$ then changes its $c-3$ private cells in addition to the cells that $t$
changes in $r'$, at a total cost of at most $K+n(c-3)$. The result is consistent: the
extended target tuples pairwise disagree on $A_1$, $A_2$ and $A_3$, as do the target
triples of $r'$, and they agree with no enforcing tuple in any attribute.

Next, given a consistent update of $r^*$, we construct a consistent update of $r$.
Let $\hat{r}$ be a consistent update of $r^*$ with $\distu(\hat{r},r^*)\le K+n(c-3)$.
Since $N$ exceeds this budget, every enforcing group retains an unchanged copy, and so
every enforcing tuple $e_p$ is a target tuple of $\hat{r}$. As the target tuples of
distinct clusters disagree on every attribute, we conclude the following.
\begin{itemize}
  \item[($\star$)] For every private value $p$, the only target tuple of $\hat{r}$ that
        contains $p$ in the attribute in which $e_p$ contains it is $e_p$ itself.
\end{itemize}
We say that a tuple $t^*$ is \e{captured} if its target tuple is an enforcing tuple
$e_p$. Since $e_p$ contains, in every attribute other than the one that holds $p$, a
value that occurs nowhere else, a captured tuple retains at most one cell and therefore
costs at least $c-1$.

We first show that we may assume that no tuple of $\hat{r}$ is captured. Let $t^*$ be a
captured tuple. If some target tuple $v$ of $\hat{r}$ that is not an enforcing tuple
agrees with $t^*$ on an attribute, then we move $t^*$ to the cluster of $v$; the tuple
$t^*$ then retains a cell and costs at most $c-1$. Otherwise, we place $t^*$ in a
cluster of its own, the target tuple of which agrees with $t^*$ on $A_1$, $A_2$ and
$A_3$ and contains, in each of $A_4,\dots,A_c$, a value that occurs nowhere in $r^*$
and in no other target tuple; the tuple $t^*$ then costs $c-3$. In both cases, the
result is again a consistent update of $r^*$, and its cost is no higher than that of
$\hat{r}$. This is clear in the first case. In the second case, the new target tuple
contains new values in $A_4,\dots,A_c$ and agrees with $t^*$ on $A_1$, $A_2$ and $A_3$,
so it suffices to observe that no other target tuple agrees with $t^*$ on $A_1$, $A_2$
or $A_3$. For a target tuple that is not an enforcing tuple, this is the assumption of
the case; and an enforcing tuple contains in $A_1$, $A_2$ and $A_3$ values that occur
nowhere else in $r^*$, whereas $t^*$ contains there values of $r$. Each application of
this transformation reduces the number of captured tuples by one, so we may indeed
assume that $\hat{r}$ has none.

By ($\star$), a tuple $t^*$ that is not captured retains none of its private cells, and
hence it changes all $c-3$ of them, in addition to the cells among $A_1$, $A_2$ and
$A_3$ on which it differs from its target tuple. Denoting by $d_t$ the number of the
latter, we obtain that $\distu(\hat{r},r^*)\ge n(c-3)+\sum_{t\in r}d_t$ and, therefore,
that $\sum_{t\in r}d_t\le K$.

We construct $r'$ by assigning to every tuple $t$ of $r$ the restriction of the target
tuple of $t^*$ to $A_1$, $A_2$ and $A_3$. Every tuple then changes precisely $d_t$
cells, so $\distu(r',r)=\sum_{t\in r}d_t\le K$. Finally, $r'$ is consistent: two tuples
of $r'$ that originate in the same cluster of $\hat{r}$ are equal, and two tuples that
originate in distinct clusters disagree on each of $A_1$, $A_2$ and $A_3$, since the
target tuples of distinct clusters disagree on every attribute.
\end{proof}

\begin{reptheorem}{\ref{thm:kc-hard}}
For every $c \ge 3$, computing an optimal U-repair for $\scdepset{c}$ is NP-hard.
\end{reptheorem}

\begin{proof}
For $c=3$, this is Lemma~\ref{lem:k3-hard}, the proof of which shows that it is
NP-complete to decide whether a relation over $A_1$, $A_2$ and $A_3$ has a consistent
update of cost at most a given budget. For $c>3$, Lemma~\ref{lem:k3-to-kc} reduces this
decision problem, in polynomial time, to the corresponding problem for $\scdepset{c}$.
\end{proof}

\section{Missing Proofs of Section~\ref{sec:small-sccs}}\label{sec:app-small-scc}

In this section, we provide the missing proofs for Section~\ref{sec:small-sccs}.

\subsection{Proof of Proposition~\ref{prop:match-to-att-hard}}

\begin{repproposition}{\ref{prop:match-to-att-hard}}
Computing an optimal U-repair for $A\lra B\ra C$ is NP-hard.
\end{repproposition}

\begin{proof}
We reduce from \mathsc{Vertex Cover}. Let $G=(V,E)$ be a graph and let $k$ be a bound
on the size of a vertex cover; we may assume that $k\le|V|$, since $V$ itself is a
vertex cover. We set $L\defeq|V|+1$ and $M\defeq L|E|+|V|+1$.

We construct a relation $r$ over $\set{A,B,C}$, writing a tuple as a triple $(a,b,c)$
that lists its values for $A$, $B$ and $C$. The relation consists of the following
tuples.
\begin{itemize}
  \item \e{Anchors.} For every vertex $v\in V$, the relation contains the \e{anchors}
        of $v$, namely $M+1$ copies of $(v,v,0)$ and $M$ copies of $(v,v,1)$.
  \item \e{Edge gadgets.} For every edge $\set{u,v}\in E$, the relation contains the
        \e{edge gadget} of $\set{u,v}$, namely $L$ copies of $(u,v,1)$, where the two
        endpoints are ordered arbitrarily.
\end{itemize}
The budget is $K_r\defeq M|V|+L|E|+k$. We show that $G$ has a vertex cover of size at
most $k$ if and only if $r$ has a consistent update $r'$ with $\distu(r',r)\le K_r$.

Given a vertex cover of $G$, we construct a consistent update of $r$.
Let $S$ be a vertex cover of $G$ with $|S|\le k$, and let $c_v\defeq1$ if $v\in S$ and
$c_v\defeq0$ otherwise. We change to $c_v$ the $C$-value of every anchor of $v$ whose
$C$-value differs from $c_v$, at a cost of $M+1$ if $v\in S$ and of $M$ otherwise, and
therefore at a total cost of $M|V|+|S|$. For every edge $\set{u,v}$, we pick an
endpoint in $S$, say $u$, and change to $u$ the $B$-value of the $L$ copies of
$(u,v,1)$, at a cost of $L$ per edge. The total cost is $M|V|+|S|+L|E|\le K_r$.

Every tuple of the resulting relation is of the form $(v,v,c_v)$ for a vertex $v$: the
anchors of $v$ by construction, and each edge tuple because $c_u=1$ for the chosen
endpoint $u$. The FDs $A\ra B$ and $B\ra A$ hold because the $A$- and $B$-values of
every tuple coincide, and $B\ra C$ holds because all the tuples with $B=v$ have the
$C$-value $c_v$.

Next, given a consistent update of $r$, we construct a vertex cover of $G$.
Let $r'$ be a consistent update of $r$ with $\distu(r',r)\le K_r$. Since
$r'\models A\lra B$, the pairs of $A$- and $B$-values of $r'$ form a partial bijection:
all the tuples that share their $A$-value also share their $B$-value, and vice versa.
Moreover, since $r'\models B\ra C$, all the tuples that share their $B$-value share
their $C$-value as well.

Consider the anchors of a vertex $v$. In $r$, all of them use the pair $(v,v)$ in $A$
and $B$, and they split into a group of $M+1$ tuples with $C=0$ and a group of $M$
tuples with $C=1$, violating $B\ra C$. Hence, we must modify at least one cell in every
tuple of one of the two groups, at a cost of at least $M$ per vertex. The budget is
$M|V|+L|E|+k$, so this leaves a slack of $L|E|+k$.

Pulling the two groups apart is expensive. We say that an anchor of $v$ \e{stays} if it
retains both its $A$-value $v$ and its $B$-value $v$ in $r'$, and that it \e{leaves}
otherwise. Suppose that at least one anchor of $v$ stays and at least one leaves. The
anchors that stay use the pair $(v,v)$, so every tuple of $r'$ with $A=v$ has $B=v$
and every tuple of $r'$ with $B=v$ has $A=v$. Therefore, an anchor that leaves cannot
retain just one of $A=v$ and $B=v$: changing one of the two alone collides with the
group that stays. It must change both, at a cost of $2$ per leaving tuple. As the
smaller group has $M$ tuples, pulling a group apart costs at least $2M$. If no anchor
of $v$ stays, then each of the $2M+1$ anchors of $v$ changes a cell, which costs at
least $2M+1$. As $M$ exceeds the slack, neither option is affordable. Hence, every
vertex has an anchor that stays, and every vertex unifies its anchors on a single value
$c_v$ of $C$: the anchors that stay share the $B$-value $v$, hence they share their
$C$-value as well, and we denote it by $c_v$. An anchor that stays costs at
least $1$ if its $C$-value in $r$ differs from $c_v$, so the anchors of $v$ cost at
least $M$ if $c_v=0$, at least $M+1$ if $c_v=1$, and at least $2M+1$ if
$c_v\notin\set{0,1}$. The last option exceeds the slack, so $c_v\in\set{0,1}$. Let
$S\defeq\setof{v\in V}{c_v=1}$; the anchors cost at least $M|V|+|S|$.

The edge gadgets test whether $S$ covers $E$. Since an anchor of $v$ stays and carries
the pair $(v,v)$, the FD $A\ra B$ implies that every tuple of $r'$ with $A=v$ has
$B=v$, and $B\ra C$ implies that it has $C=c_v$; symmetrically, every tuple with $B=v$
has $A=v$ and $C=c_v$. Now consider the gadget of an edge $\set{u,v}$. None of its $L$
copies of $(u,v,1)$ can retain both $A=u$ and $B=v$, so each of them must join the
cluster of $u$ or the cluster of $v$, or leave both. Joining that of $u$ amounts to
changing the $B$-value to $u$, and it changes nothing else exactly when $c_u=1$, since
then $C=1$ is already the $C$-value of the cluster; if $c_u=0$, the copy must also
change its $C$-value to $0$. The argument for joining the cluster of $v$ is symmetric.
A copy that joins neither cluster changes both its $A$-value and its $B$-value. Hence,
the gadget costs at least $L$, and at least $2L$ if $c_u=c_v=0$, that is, if the edge
is not covered by $S$.

Let $E_0$ be the set of the edges that $S$ does not cover. Putting the bounds together,
\[
  K_r\;\ge\;\distu(r',r)\;\ge\;M|V|+|S|+L|E|+L|E_0| ,
  \qquad\text{hence}\qquad
  |S|+L|E_0|\;\le\;k .
\]
Since $L=|V|+1>k$, we get $E_0=\emptyset$, so $S$ is a vertex cover, and $|S|\le k$.
\end{proof}

\subsection{Proof of Proposition~\ref{prop:att-to-match-hard}}

\begin{repproposition}{\ref{prop:att-to-match-hard}}
Computing an optimal U-repair for $A\ra B\lra C$ is NP-hard.
\end{repproposition}

\begin{proof}
We reduce from \mathsc{3SAT}, assuming that every clause consists of three literals
over three distinct variables. Let $\phi$ be a formula with variable set $V$ and $m$
clauses, and set $L\defeq 4m+1$. With every variable $v$ we associate two fresh values
$T_v$ and $F_v$, one for each truth value. The constructed relation over $\set{A,B,C}$
consists of: 
\begin{itemize}
    \item a \e{variable gadget} for every variable $v$, namely $L$ copies of
$(v,v,T_v)$ and $L$ copies of $(v,v,F_v)$, and 
\item a \e{clause gadget} for every clause
$c=(\ell_1\vee\ell_2\vee\ell_3)$ over the variables $v_1,v_2,v_3$, namely the three
tuples: 
\[
  (c,v_1,w_3)\qquad (c,v_2,w_1) \qquad (c,v_3,w_2)
\]
where $w_i$ is the value of $v_i$ that satisfies $\ell_i$, that is, $T_{v_i}$ if
$\ell_i$ is positive and $F_{v_i}$ if $\ell_i$ is negative.
\end{itemize}
Note that the indices in a clause gadget
are shifted: $v_1$, $v_2$ and $v_3$ coincide with $w_3$, $w_1$ and $w_2$, respectively. For example, if $c$ is $x\lor y\lor \neg z$, then the three tuples are:
\[
  (c,x,F_z)\qquad (c,y,T_x) \qquad (c,z,T_y)
\]
The budget is $K\defeq L|V|+4m$. 

Given a satisfying assignment $\tau$ of $\phi$, we construct a consistent update of
cost exactly $K$. Denote by $\sigma(v)$ the value $T_v$ if $\tau(v)=\true$ and the
value $F_v$ otherwise. 
\begin{itemize}
\item For every variable $v$, we change to $\sigma(v)$ the $C$-value
of the $L$ tuples of its gadget that carry the other value (cost $L|V|$ in total). 
\item For
every clause $c$, we pick a literal $\ell_i$ that $\tau$ satisfies, and we replace the
three tuples of its gadget by $(c,v_i,w_i)$; one of the three has already the $B$-value
$v_i$, another already has the $C$-value $w_i=\sigma(v_i)$, and the third has neither
(cost $1+1+2=4$ per clause, hence $4m$ in total). 
\end{itemize}
The resulting relation $r'$ is consistent: it satisfies $A\ra B$ since the
tuples that share an $A$-value are those of a single gadget, and it satisfies $B\lra C$
since its pairs of $B$- and $C$-values are precisely the pairs $(v,\sigma(v))$.

Now suppose that we are given a consistent update $r'$ of cost at most $K$. Since $B\lra C$,
the pairs of $B$-values and $C$-values that occur in it form a partial bijection, which we denote by $\sigma$. If no
tuple has the $B$-value $v$, then $\sigma(v)$ is undefined, and we let it be a value that is neither $T_v$ nor $F_v$, say $\diamond$. Hence, $\sigma(v)$ is defined for every variable $v$. 

Consider the gadget of a variable $v$, and let $t$ be a tuple of the
gadget whose $C$-value in $r$ differs from $\sigma(v)$. The update cannot keep both the
$B$-value and the $C$-value of $t$, and it therefore changes at least one cell of $t$. Thus, the gadget costs at least the number of its tuples whose $C$-value differs from
$\sigma(v)$. As the gadget splits evenly between $T_v$ and $F_v$, this number is at least $L$ if
$\sigma(v)$ is one of $T_v$ and $F_v$, and at least $2L$ if it is a different $C$-value in $r'$. The cost is also at least $2L$ if $\sigma(v)=\diamond$, since it means that all $2L$ occurrences of $v$ had to disappear in the update. 

In conclusion, the variable gadgets cost at least $L|V|$ in total, and at least
$L|V|+L$ if $\sigma$ maps some variable to neither of its two values. Since $L$ exceeds
the remaining budget of $4m$, the latter is impossible. Hence, $\sigma$ maps every variable $v$ to one of $T_v$ and $F_v$, and we can treat it as a truth assignment $\tau$. We complete the proof by showing that $\tau$ is a satisfying assignment. Since the remaining budget is $4m$, it suffices to show that a clause gadget costs at least $4$, and that it
costs exactly $4$ only if $\tau$ satisfies its clause. We do so next.

Due to the shift, every tuple of
a clause gadget pairs a variable with a value of a \e{different} variable; no such pair belongs to $\sigma$, so each of the three tuples changes its $B$-value or its
$C$-value, and the gadget costs at least $3$. If the three tuples do not all retain the $A$-value $c$, then at least one of them changes its $A$-value as well, and the gadget
costs at least $4$. If they all retain it, then $A$ determines both $B$ and $C$, so the
three tuples collapse into a single pair of $\sigma$; their three $B$-values are
distinct and so are their three $C$-values, hence this pair agrees with at most two of
the six $B$-cell and $C$-cells of the gadget, and the gadget again costs at least $4$. In both cases, a cost of exactly $4$ forces a pair of $(v,\sigma(v))$ that agrees with one $B$-cell and one $C$-cell of the gadget:
\begin{itemize}
\item In the first case, exactly one tuple changes its $A$-value and one of its $B$ of $C$ values, so the two tuples that retain their $A$ have a budget of 2. Since they agree on $A$, they should be updated into identical tuples, and the only way we can do so is by retaining at least one of the $w_i$ of the gadget, and $w_i$ satisfies the clause (by construction). 
\item In the second case, where $A$-values are retained, the three tuples should become identical, and the gadget should update at least two $B$-cells and at least two $C$-cells. Hence, it updates precisely two $C$-cells, and again at least one $w_i$ is retained and satisfies the clause.
\end{itemize}
We conclude that the clause is necessarily satisfied and $\tau$ is a satisfying assignment, as claimed.
\end{proof}

\subsection{Proof of Proposition~\ref{prop:match-to-match-hard}}

\begin{repproposition}{\ref{prop:match-to-match-hard}}
Computing an optimal U-repair for $A\lra B\ra C\lra D$ is NP-hard.
\end{repproposition}

\begin{proof}
We reduce from $A\lra B\ra C$, which is intractable by
Proposition~\ref{prop:match-to-att-hard} and consists of the first three attributes of
$A\lra B\ra C\lra D$. In the spirit of Lemma~\ref{lem:k3-to-kc}, the reduction pads
every tuple with a private value in the new attribute $D$ and anchors this value by an
enforcing group.

Let $r$ be a relation over $\set{A,B,C}$ with $n$ tuples, let $K$ be a budget, and set
$N\defeq K+n+1$. For every $i\in\tids(r)$, the relation $r'$ over $\set{A,B,C,D}$
contains:
\begin{itemize}
    \item the \e{main tuple} $M_i$ that agrees with $r[i]$ on $A$, $B$ and $C$ and
contains a \e{private} value $d_i$ in $D$, and
\item for every private value $d_i$,
the relation $r'$ contains an \e{enforcing group} of $N$ copies of the \e{enforcing
tuple} $E_i$ that contains $d_i$ in $D$ and, in each of $A$, $B$ and $C$, a value that
occurs nowhere else. 
\end{itemize}
The
budget is $K+n$. We show that $r$ has a consistent update of cost at most $K$ with
respect to $A\lra B\ra C$ if and only if $r'$ has a consistent update of cost at most
$K+n$ with respect to $A\lra B\ra C\lra D$.

Now, given a consistent update of $r$, we construct a consistent update of $r'$,
Let $s$ be a consistent update of $r$ with $\distu(s,r)\le K$. We may assume that $s$
uses none of the values that the construction introduces, since renaming the values of
$s$ that do not occur in $r$ affects neither its cost nor its consistency. With every
value $c$ that occurs in the $C$-column of $s$ we associate a fresh value $\delta_c$,
and we define the update $s'$ of $r'$ that leaves every enforcing group unchanged and
replaces $M_i$ by the tuple $(s[i][A],s[i][B],s[i][C],\delta_{s[i][C]})$. Every main
tuple changes its $D$-cell, since $\delta_{s[i][C]}$ is fresh and $d_i$ is not, and the
main tuples change $\distu(s,r)$ cells in $A$, $B$ and $C$ in total; hence
$\distu(s',r')=\distu(s,r)+n\le K+n$.

We verify that $s'\models A\lra B\ra C\lra D$. Among the main tuples, $A\lra B$ and
$B\ra C$ hold because $s\models A\lra B\ra C$, and $C\lra D$ holds because
$c\mapsto\delta_c$ is injective. An enforcing tuple $E_i$ shares no value with any
other tuple of $s'$ in any attribute: its $A$-, $B$- and $C$-values occur only in its
own enforcing group, and its private value $d_i$ is no longer used by $M_i$. Hence, no
FD is violated.

Next, given a consistent update of $r'$, we construct a consistent update of $r$.
Let $s'$ be a consistent update of $r'$ with $\distu(s',r')\le K+n$. Since $N$ exceeds
this budget, every enforcing group retains an unchanged copy, and so every enforcing
tuple $E_i$ occurs in $s'$. As the enforcing tuples occur in $s'$, the FDs $C\ra D$ and
$D\ra C$ imply that a main tuple agrees with an enforcing tuple on $C$ if and only if
it agrees with it on $D$.

We first show that we may assume that no main tuple agrees with an enforcing tuple on
$C$. Consider a group $Q$ of main tuples that have been updated to the same
$(a,b,c,d)$, where $(c,d)$ is the pair of $C$- and $D$-values of an enforcing tuple
$E$. Every tuple of $Q$ has changed its $C$-cell, since $c$ occurs nowhere in $r$.
Moreover, at most one tuple of $Q$ retains its original $D$-value, namely the main
tuple whose private value is $d$, if it belongs to $Q$. Let $j\in Q$ and let $c'$ be
the original $C$-value of $j$. If $c'$ already occurs in $s'$, let $d'$ be the
$D$-value already paired with it; otherwise, let $d'$ be a fresh value. We replace the
pair $(c,d)$ of every tuple of $Q$ by $(c',d')$. If $(a,b)$ is the pair of $A$- and
$B$-values of $E$, those cells were already changed, since the values of $E$ occur
nowhere in $r$, and we replace them by a fresh pair as well, so as not to violate
$B\ra C$ against $E$. One $C$-cell is then restored, namely that of $j$, and at most
one extra $D$-cell is changed, namely if some tuple of $Q$ had retained $d$ and $d'$
differs from $d$. Hence, the cost does not increase.

The result is again a consistent update of $r'$. The tuples of $Q$ now agree on
$(c',d')$. If we replaced $A$ and $B$, they agree on a fresh pair that occurs nowhere
else. If we kept $(a,b)$, then no tuple outside $Q$ has the $B$-value $b$: the
enforcing tuple $E$ has a different $B$-value, and every other tuple with $B=b$ would
have had $C=c$ by $B\ra C$, hence would have belonged to $Q$. The value $c'$ occurs in
$r$ and is therefore the $C$-value of no enforcing tuple, so the tuples of $Q$ no
longer agree with an enforcing tuple on $C$ or on $D$. Repeating the modification for
every such group, we obtain a consistent update of $r'$ of cost at most $K+n$ in which
no main tuple agrees with an enforcing tuple on $D$, and we assume from now on that
$s'$ is such an update.

In $s'$, no main tuple retains its private value, since it would then agree with its
own enforcing tuple on $D$ and hence on $C$. The main tuples therefore change $n$ cells
in the attribute $D$, and so they change at most $K$ cells in $A$, $B$ and $C$. Let $s$
be the update of $r$ obtained by restricting the main tuples of $s'$ to $A$, $B$ and
$C$; then $\distu(s,r)\le K$, and $s$ is consistent, since the FDs $A\ra B$, $B\ra A$
and $B\ra C$ are implied by $A\lra B\ra C\lra D$ and hence hold in $s'$, in particular
among the main tuples.
\end{proof}

\end{document}